\theoremstyle{plain}
\newtheorem{theorem}{Theorem}
\newtheorem{lemma}[theorem]{Lemma}
\theoremstyle{definition}
\theoremstyle{remark}
\newtheorem{remark}{Remark}
\begin{document}

\twocolumn[
\icmltitle{
Robust Non-Linear Feedback Coding via Power-Constrained Deep Learning} 
\thispagestyle{plain}
\pagestyle{plain}





\begin{icmlauthorlist}
\icmlauthor{Junghoon Kim}{Purdue}
\icmlauthor{Taejoon Kim}{Kansas}
\icmlauthor{David Love}{Purdue}
\icmlauthor{Christopher Brinton}{Purdue}
\end{icmlauthorlist}

\icmlaffiliation{Purdue}{Electrical and Computer Engineering, Purdue University, West Lafayette, IN, USA}
\icmlaffiliation{Kansas}{Electrical Engineering and Computer Science, University of Kansas, Lawrence, KS, USA}

\icmlcorrespondingauthor{Junghoon Kim}{kim3220@purdue.edu}

\icmlkeywords{Machine Learning, ICML}

\vskip 0.3in
]



\printAffiliationsAndNotice{}  

\begin{abstract}

The design of codes for feedback-enabled communications has been a long-standing open problem.
Recent research on non-linear, deep learning-based coding schemes have demonstrated significant improvements in communication reliability over linear codes, but are still vulnerable to the presence of forward and feedback noise over the channel. In this paper, we develop a new family of non-linear feedback codes that greatly enhance robustness to channel noise. Our autoencoder-based architecture is designed to learn codes based on consecutive blocks of bits, which obtains de-noising advantages over bit-by-bit processing to help overcome the physical separation between the encoder and decoder over a noisy channel. Moreover, we develop a power control layer at the encoder to explicitly incorporate hardware constraints into the learning optimization, and prove that the resulting average power constraint is satisfied asymptotically. Numerical experiments demonstrate that our scheme outperforms state-of-the-art feedback codes by wide margins over practical forward and feedback noise regimes, and provide information-theoretic insights on the behavior of our non-linear codes. Moreover, we observe that, in a long blocklength regime, canonical error correction codes are still preferable to feedback codes when the feedback noise becomes high. Our code is available at  
\href{https://anonymous.4open.science/r/RCode1}{https://anonymous.4open.science/r/RCode1}.

\end{abstract}

\section{Introduction}
\vspace{-1mm}


The design of codes has been an important field of research both in information theory and communications,
targeting
efficient and reliable data transmission across a noisy channel.
We now have near-optimal codes for 
the canonical open-loop
additive white Gaussian noise (AWGN) channel setting first proposed by Shannon~\cite{shannon1948mathematical},
thanks to the extraordinary advancements in code design over a number of decades.
%
%
However, the design of practical codes 
for a variety of other important channel models
has remained long-standing open problems.
%
In particular, when
\textit{feedback} is available in communication systems, i.e., when a transmitter can obtain information on the receive signals through a reverse link from a receiver,
it has been shown that while capacity cannot be increased~\cite{shannon1956zero}, communication reliability can be improved by utilizing  \textit{feedback codes}~\cite{schalkwijk1966coding, butman1969general,shayevitz2011optimal, ben2017interactive}. 
However, the design of feedback codes 
is non-trivial 
since both the bit stream and feedback information (i.e., past receive signals) should be incorporated into the design.  Though feedback codes hold the potential to revolutionize future communication systems, major innovations are needed in their design and implementation.







\vspace{-1mm}
\subsection{Linear Feedback Coding}
\vspace{-1mm}

Over several decades, research on the design of feedback codes for closed-loop AWGN channels mostly focused on the \textit{linear} family of codes, which simplifies code design.
The seminal work~\cite{schalkwijk1966coding}  introduced a linear coding technique for AWGN channels with noiseless feedback, known as the SK scheme, which achieves doubly exponential decay in the probability of error.
However, for noisy feedback, the SK scheme does not perform well.
In response, \cite{chance2011concatenated} introduced a linear coding scheme for AWGN channels with noisy feedback, known as the CL scheme. The CL scheme was further examined and found to be optimal within the linear family of codes under the noisy feedback scenario~\cite{agrawal2011iteratively}.
There have been attempts to view the linear feedback code design as 
feedback stabilization in control theory~\cite{elia2004bode} and
dynamic programming~\cite{mishra2021linear}.
However, the linear assumption made in these works 
severely limits their ability to produce optimal codes.


\subsection{Deep Learning-Based Channel Coding}

A recent trend of research has been examining code design from a deep learning perspective to take advantage of its non-linear structure.
%
%
%
Neural encoders and decoders
have been shown to improve communication reliability and/or efficiency for various canonical channel settings including open-loop AWGN channels~\cite{nachmani2016learning, o2017introduction, dorner2017deep, kim2018communication, jiang2019turbo, habib2020learning, makkuva2021ko, chen2021cyclically}.
%
In the closed-loop AWGN channel case,
Deepcode~\cite{kim2018deepcode}  proposes an autoencoder architecture to generate non-linear feedback codes. Deepcode  was shown to outperform SK and CL in terms of error performance across many noise scenarios due to the wider degree of flexibility that non-linearity provides for the creation of feedback codes.
Deep extended feedback (DEF) codes~\cite{safavi2021deep} generalize Deepcode by including parity symbols generated based on forward-channel output observations over longer time intervals and supporting high-order modulation in the encoder to maximize spectral efficiency.
Generalized block attention feedback (GBAF) codes~\cite{ozfatura2022all} have recently been proposed with self-attention modules that can incorporate different neural network architectures. It has been shown that GBAF codes significantly outperform the existing solutions, especially in the noiseless feedback scenario.


Nevertheless, the vulnerability of these feedback codes to high forward/feedback noise remains understudied. High noise settings have become more pervasive as wireless networks have become denser, making reliable communications even more dependent on channel feedback~\cite{shafi20175g}. As pointed out in existing works~\cite{o2017introduction, dorner2017deep, jiang2019turbo}, end-to-end learning for the design of codes over point-to-point channels benefits significantly from an autoencoder architecture’s ability to jointly train the encoder and decoder. In this respect, high noise regimes pose two central challenges:
 
\textbf{(1) Encoder-decoder mismatch.} The encoder (as a transmitter) and decoder (as a receiver) are implemented on two separate platforms. 
Channel noise therefore causes mismatches in the latent space for coding between encoder and decoder which cannot be directly calibrated due to the limited bandwidth of the forward and feedback links.
In Deepcode~\cite{kim2018deepcode}, the encoding structure consists of two distinct phases and operates bit-by-bit, which limits the size of the latent space available to build resilience against high noise conditions. 
In this work, we consider finite-length bit streams as the units for autoencoder learning to maximally benefit from \textit{noise averaging} that forms the basis for error correction codes~\cite{clark2013error}, and show the robustness of our codes to high noise levels.




\textbf{(2) Inefficient power allocation.} The transmitter has intrinsic hardware limitations which constrain the encoding outputs in terms of power across channel uses. Coding schemes for point-to-point channels without feedback~\cite{o2017introduction, dorner2017deep, jiang2019turbo} exploit normalization at the encoding outputs to satisfy the power constraint. For power allocation in feedback-enabled channels, Deepcode~\cite{kim2018deepcode} employs two layers of power weights in addition to a normalization layer. 
However, none of these approaches account for the impact of channel noise on the efficacy of power allocation. 
In this work, we show how power control can be explicitly incorporated into the encoder optimization procedure to obtain constraint satisfaction guarantees.

\subsection{Summary of Contributions}

%
\begin{itemize}
    \item 
    We develop a recurrent neural network (RNN) autoencoder-based architecture for power-constrained, feedback-enabled communications. Using this architecture, we suggest a new class of non-linear feedback codes that build robustness to forward and feedback noise in AWGN channels with feedback.
    \item 
    Our learning architecture addresses the challenge of encoder-decoder separation over noisy channels by considering the entire bit stream as a single unit to potentially benefit from noise averaging, analogously to  error correction codes.
    We also adopt a bi-directional attention-based decoding architecture  to fully exploit  correlations among noisy receive signals.
    \item 
    We augment our encoder architecture with a power control layer, which we prove satisfies power constraints asymptotically. 
    We also provide information-theoretic insights on the power distribution obtained from our non-linear feedback codes, 
    showing that power allocation is highest for early channel uses and then tapers off over time, similar to optimal linear codes.
    \item Through numerical experiments, we show that our methodology can outperform state-of-the-art feedback coding schemes by wide margins.
    While other feedback codes are vulnerable to high feedback noise, our codes still perform well as long as the forward noise is reasonable.
    Also, unlike existing codes, our methodology draws significant benefit from reductions in feedback noise even when the forward noise is high. 
    \item 
    We propose a modulo-based approach to extend our finite block length coding architecture to long block lengths. This provides computational scalability and generalization across different block lengths without requiring any re-training or parameter-tuning.
\end{itemize}







\section{System Model and Optimization}



\begin{figure*}[ht]
  \includegraphics[width=.675\linewidth]{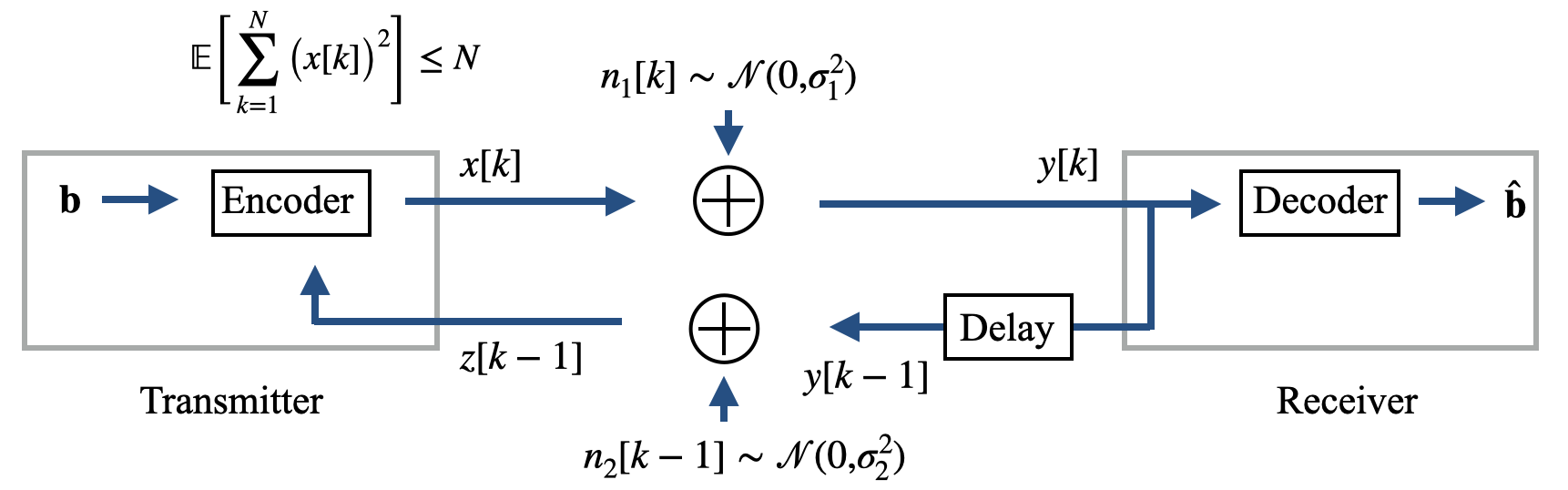}
  \centering
  \vspace{-2mm}
  \caption{A point-to-point AWGN communication channel with noisy feedback.}
  \label{fig:system_merged}
  \vspace{-2mm}
\end{figure*} 

\subsection{Transmission Model}
\label{ssec:trans}
We consider a canonical point-to-point AWGN communication channel with noisy feedback as shown in Figure~\ref{fig:system_merged}.
We assume that 
the transmission occurs over $N$ channel uses (timesteps).
Let $k \in \{1,...,N\}$ denote the index of channel use and $x[k] \in \mathbb{R}$ represent the transmit signal at time $k$.
At time $k$, the receiver receives 
the signal
\begin{equation}
    y[k] = x[k] + n_1[k] \in \mathbb{R}, \quad k=1,...,N,
    \label{eq:y}
\end{equation}
where $n_1[k] \sim \mathcal{N}(0,\sigma_1^2)$ is Gaussian noise for the forward channel.
We consider an average power constraint as
\begin{equation}
    \mathbb{E}\bigg[ \sum_{k=1}^N  \big(x[k] \big)^2   \bigg] \le N.
\end{equation}

At each time $k$, the receiver feeds back its receive signal $y[k]$ to the transmitter over a noisy feedback channel, as shown in Figure 1. The transmitter then receives
\begin{align}
    z[k] = y[k] + n_2[k],
    \quad k=1,...,N,
    \label{eq:z}
\end{align}
where $n_2[k] \sim \mathcal{N}(0,\sigma_2^2)$ is the feedback noise.

\subsection{Functional Form of Encoding and Decoding}
\label{ssec:enc/dec}
The goal of the transmission is to successfully deliver a bit stream ${\bf b} \in \{0,1\}^K$ from the transmitter to the receiver over a noisy channel, where $K$ is the number of source bits.
For efficient communication of ${\bf b}$, we consider the following transmitter encoding and receiver decoding procedures.

\textbf{Encoding.}
The transmitter encodes the bit stream ${\bf b} \in \{0,1\}^K$ to generate the transmit signals of $N$ channel uses, i.e., $\{{x}[k]\}_{k=1}^N$.
The coding rate is defined by $r=K/N$.
%
Provided feedback from the receiver, the encoding at the transmitter is described as a function of the bit stream ${\bf b}$ and the feedback signals $\{z[j]\}_{j=1}^{k-1}$ in \eqref{eq:z}. 
Defining an encoding function at time $k$ as $f_k: \mathbb{R}^{K+k-1} \to \mathbb{R}$, 
we represent the encoding at time $k$ as  
\begin{align}
    x[k] = f_k({\bf b}, z[1], ..., z[k-1]), \quad k=1,...,N.
    \label{eq:encoding}
\end{align}

\textbf{Decoding.} Once the $N$ transmissions are completed, the receiver decodes the receive signals of $N$ channel uses, i.e., $\{{y}[k]\}_{k=1}^N$ in \eqref{eq:y}, to obtain an estimate of the bit stream, $\hat {\bf b} \in \{0,1\}^K$.
%
Define a decoding function as $g:\mathbb{R}^N \to \{0,1\}^K$. Then, we represent the decoding process conducted by the receiver as
\begin{align}
    \hat {\bf b} = g(y[1], ..., y[N]).
\end{align}

\subsection{Optimization for Encoder and Decoder}
\label{ssec:opt}

Block error rate (BLER), the ratio of the number of incorrect bit streams to the total number of bit streams, is commonly used as a performance metric to assess the communication reliability of
bit stream transmissions.
Formally, with BLER defined as $\text{Pr}[ {\bf b} \neq \hat {\bf b} ]$,
we consider our encoder-decoder design objective to be the following optimization problem:
\begin{align}
&  \underset{ f_1,..,f_N, g } {\text{minimize}} & & 
 \text{Pr}[ {\bf b} \neq \hat {\bf b} ]
 \label{opt:obj}
\\
& \text{subject to}
& &  
 \mathbb{E}_{ {\bf b}, {\bf n}_1, {\bf n}_2 } \bigg[ \sum_{k=1}^N \big( x[k] \big)^2  \bigg] \le N
 \label{opt:const:power}
\end{align}

The expectation in \eqref{opt:const:power} is  taken over the distributions of the bit stream $ {\bf b}$ and the noises, ${\bf n}_1$ and ${\bf n}_2$, since the encoding output $x[k]$ depends on $ {\bf b}$, ${\bf n}_1$, and ${\bf n}_2$ in \eqref{eq:encoding}, where
 ${\bf n}_1 = [n_1[1], ..., n_1[N]]^T$ and ${\bf n}_2 = [n_2[1], ..., n_2[N]]^T$.
 %

Thus, the goal is to optimize $N$ encoding functions $\{f_k\}_{k=1}^N$ and a decoding function $g$.
However, the complexity of designing $N$ encoding functions increases to a great extent with the number of channel uses $N$.
To mitigate the associated design complexity,
we introduce a state propagation-based encoding technique, discussed next.





\subsection{State Propagation-Based Encoding}
\label{ssec:state_enc}

Given that the inputs used for encoding at each time in \eqref{eq:encoding} are overlapping, 
we expect that the $N$ encoding functions are correlated with one another. 
As a result,
we consider a state propagation-based encoding technique where the encoding is performed at each timestep using only two distinct functions: (i) signal-generation and (ii) state-propagation.


Defining the signal-generation function as $f: \mathbb{R}^{K+N_s+1} \to \mathbb{R} $,
we re-write the encoding process in \eqref{eq:encoding} as
\vspace{-.5mm}
\begin{align}
    x[k] = f({\bf b}, z[k-1], {\bf s}[k]),
    \quad k=1,...,N,
    \label{eq:func_f}
\end{align}
where we assume $z[-1]=0$. Here, ${\bf s}[k] \in \mathbb{R}^{N_s} $ is the state vector, which propagates over time through the state-propagation function $h:\mathbb{R}^{K+N_s+1} \to \mathbb{R}^{N_s}$, given by
\begin{align}
    {\bf s}[k] = h({\bf b}, z[k-1], {\bf s}[k-1]), \quad k=1,...,N.
    \label{eq:func_h}
\end{align}
Note that, in \eqref{eq:func_h}, the current state ${\bf s}[k]$ is updated from the prior state ${\bf s}[k-1]$ by incorporating ${\bf b}$ and $z[k-1]$. For the initial condition, we assume ${\bf s}[-1]={\bf 0}$.
This encoding model in \eqref{eq:func_f}-\eqref{eq:func_h} can be seen as a general and non-linear extension of the state-space model used for linear encoding in feedback systems~\cite{elia2004bode}.


Through this technique, we only need to design the two encoding functions, $f$ and $h$ -- instead of $N$ separate functions -- and the decoding function $g$. We thus re-write the optimization problem in \eqref{opt:obj}-\eqref{opt:const:power} as
\begin{align}
&  \underset{ f, h, g } {\text{minimize}} & & 
 \text{Pr}[ {\bf b} \neq \hat {\bf b} ]
\label{opt:obj:state-based}
\\
\vspace{-.5mm}
& \text{subject to}
& &  
\mathbb{E}_{ {\bf b}, {\bf n}_1, {\bf n}_2 } \bigg[ \sum_{k=1}^N \big( x[k] \big)^2  \bigg] \le N.
\label{opt:const:power:state-based}
\end{align}
\vspace{-4mm}

Nevertheless, the problem 
\eqref{opt:obj:state-based}-\eqref{opt:const:power:state-based} is non-trivial since the functions, $f$, $h$, and  $g$, can take arbitrary forms. 
Over several decades of research in the design of such functions,
a linear assumption was made~\cite{schalkwijk1966coding, butman1969general, chance2011concatenated}.\footnote{The existing works, such as SK, CL, and Deepcode, can be interpreted in the framework of state propagation-based encoding.}
While omitting low-complexity schemes, this constrains the degree of freedom in the design of such functions, leading to unsatisfactory error performances~\cite{kim2018deepcode}.
We next employ this state propagation-based encoding in the design of non-linear feedback codes that are robust to channel noise.





\section{Feedback Coding Methodology}
Our overall coding scheme is depicted in Figure \ref{fig:overall}. It follows the RNN autoencoder-based architecture detailed below.
\vspace{-.5mm}


\begin{figure*}[ht]
  \centering
  \vspace{-2mm}
  \includegraphics[width=.9\linewidth]{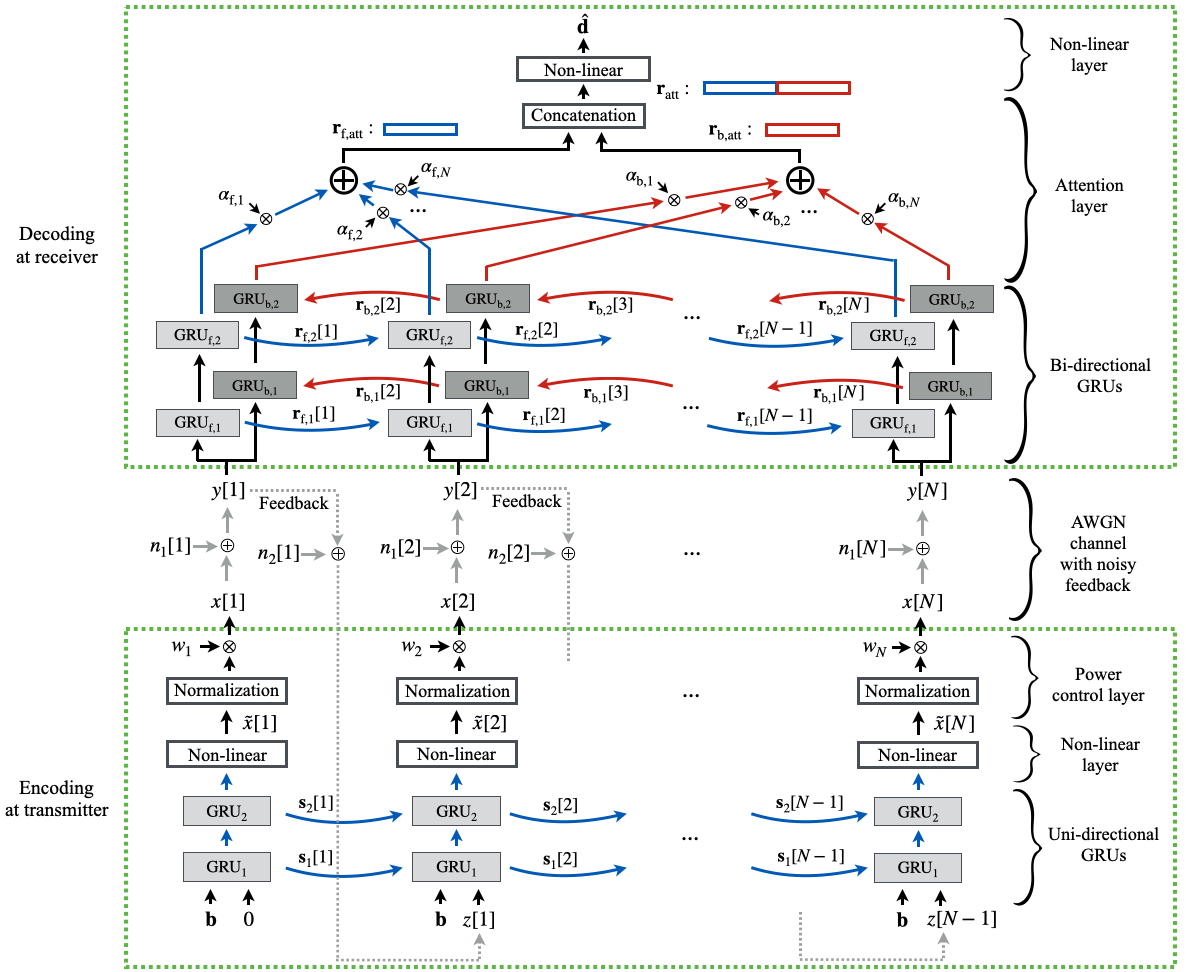}
  \vspace{-3mm}
  \caption{Our proposed RNN autoencoder-based architecture for non-linear coding over an AWGN channel with noisy feedback. At the end of the decoding process, $\hat {\bf d}$ denotes the probability distribution of $2^K$ possible outcomes of the bit stream estimate $\hat {\bf b}$.
  A compact form of the architecture is included in Appendix~\ref{sec:app:compact}.}
  \label{fig:overall}
  \vspace{-2mm}
\end{figure*}

\subsection{Encoding}
\label{ssec:encoding}

We follow the state propagation-based encoding  approach discussed in Section~\ref{ssec:state_enc}.
The state-propagation function $h$ in \eqref{eq:func_h} consists of two layers of gated
recurrent units (GRUs), while the signal-generation function $f$ in \eqref{eq:func_f} consists of a non-linear layer and a power control layer in  sequence. 


\textbf{GRUs for state propagation.}
We adopt two layers of  unidirectional GRUs to capture the time correlation of the feedback signals in a causal manner. 
Formally, we represent the input-output relationship at each layer at time $k$ as
\begin{align}
    {\bf s}_1[k] & = \text{GRU}_1({\bf b}, z[k-1], {\bf s}_1[k-1]),
    \label{eq:s1}
    \\
    {\bf s}_2[k] & = \text{GRU}_2({\bf s}_1[k], {\bf s}_2[k-1]),
    \label{eq:s2}
\end{align}
where $\text{GRU}_i$  represents a functional form of GRU processing at layer $i$ and ${\bf s}_i[k] \in \mathbb{R}^{N_{\text{s},i}}$ is the state vector obtained by $\text{GRU}_i$ at time $k$, where $i=1,2$ and $k=1,...,N$. 
For the initial conditions, we assume ${\bf s}_i[-1]={\bf 0}$, $i=1,2$.

Equations \eqref{eq:s1}-\eqref{eq:s2} can be represented as a functional form of the state propagation-based encoding in \eqref{eq:func_h}. By defining the overall state vector as ${\bf s}[k] = [{\bf s}_1[k], {\bf s}_2[k]]$, we obtain ${\bf s}[k] = h({\bf b}, z[k-1], {\bf s}[k-1])$, where $h$ implies the process of two layers of GRUs in \eqref{eq:s1}-\eqref{eq:s2}.
Note that ${\bf s}[k]$ propagates over time through the GRUs by incorporating the current input information into its state.
Because the bit stream ${\bf b}$ with length $K$ is handled as a \textit{block} to generate transmit signals with any length $N$, 
our method is flexible enough to support any coding rate $r$, unlike the prior approach~\cite{kim2018deepcode} that only appears to support~$r=1/3$.

\textbf{Non-linear layer.}
We adopt an additional non-linear layer at the output of the GRUs. The state vector at the last layer of the GRUs, i.e., ${\bf s}_2[k]$, is taken as an input to this additional non-linear layer.
Formally, we can represent the process of the non-linear layer as
\begin{align}
    \tilde x[k] = \phi( {\bf w}_e^T {\bf s}_2[k] + {b}_e  ), \quad k=1,...,N,
    \label{eq:enc:non-linear}
\end{align}
where ${\bf w}_e \in \mathbb{R}^{N_{\text{s},2}}$ and ${b}_e \in \mathbb{R}$ are the trainable weights and biases, respectively, and $\phi:
\mathbb{R} \to \mathbb{R}$ is an activation function (hyperbolic tangent).

It is possible to
use $\tilde x[k]$ directly as a transmit signal, since $\tilde x[k]$ ranges in $(-1,1)$
and satisfies the power constraint $\sum_{k=1}^N  ( \tilde x[k] )^2 \le N$.
However, this does not ensure maximum utilization of the transmit power budget. Power control over the sequence of transmit signals is essential in the design of encoding schemes for feedback-enabled communications in order to achieve robust error performance~\cite{schalkwijk1966coding, chance2011concatenated,kim2018deepcode}.

\textbf{Power control layer.} We introduce a power control layer to optimize for the power distribution, while also satisfying the power constraint in \eqref{opt:const:power:state-based}. This layer consists of two consecutive modules: (i) normalization and (ii) power-weight multiplication.
The transmit signal at time $k$
is then generated by
\begin{align}
    {x}[k] = w_k  \gamma^{(J)}_k(\tilde x[k]), \quad k=1,...,N,
    \label{eq:norm_power}
\end{align}
where 
$\gamma^{(J)}_k: \mathbb{R} \rightarrow \mathbb{R}$ is a normalization function applied to $\tilde x[k]$, which consists of the sample mean and sample variance calculated from the data with size $J$.
Here, $w_k$ is a trainable power weight satisfying $\sum_{k=1}^N w_k^2 = N$.

Through the power control layer, 
the power weights are optimized via training in a way that minimizes the BLER in \eqref{opt:obj:state-based}.
At the same time, the power constraint in \eqref{opt:const:power:state-based} should be satisfied.
To obtain a smaller BLER, it is advantageous  
to ensure maximum utilization of the power budget $N$~\cite{schalkwijk1966coding, chance2011concatenated}.
However, satisfying the power constraint in 
an (ensemble) average sense 
is non-trivial, since the distributions of $\{x[k]\}_{k=1}^N$ are unknown.
Therefore, we approach it in an empirical sense:
(i) During training, we use standard batch normalization; we normalize ${\tilde x}[k]$ with the sample mean and sample variance calculated from each batch of data (with size $N_\text{batch}$) at each $k$. (ii) After training, we calculate and save the sample mean and sample variance from the entire training data (with size $J$).
(iii) For inference, we use the saved mean and variance for normalization.

In the following lemma, we show that the above procedure guarantees satisfaction of the equality power constraint 
in an asymptotic sense
with a large number of training data used for normalization.

\begin{lemma}
    Given the power control layer in \eqref{eq:norm_power}, the power constraint in \eqref{opt:const:power:state-based} 
    converges to $N$
    almost surely,
    i.e., $\mathbb{E}_{{\bf b}, {\bf n}_1, {\bf n}_2 } \big[ \sum_{k=1}^N ( {x}[k] )^2 \big]  \xrightarrow{a.s.} N$,
    as the number of training data $J$ used for normalization tends to infinity.
\end{lemma}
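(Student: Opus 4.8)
The plan is to reduce the ensemble-average power to a sum of per-timestep normalized second moments and exploit that the deterministic weights already satisfy the budget exactly. Because the power weights $w_k$ are fixed and independent of the channel draws, linearity of expectation gives
\begin{align}
\mathbb{E}_{{\bf b},{\bf n}_1,{\bf n}_2}\Big[\sum_{k=1}^N |x[k]|^2\Big] = \sum_{k=1}^N w_k^2\, \mathbb{E}\big[|\gamma^{(J)}_k(\tilde x[k])|^2\big].
\end{align}
Since $N$ is finite and $\sum_{k=1}^N w_k^2 = N$ by construction of the power-weight module, it suffices to show that each normalized second moment $\mathbb{E}[|\gamma^{(J)}_k(\tilde x[k])|^2]$ tends to $1$ almost surely as $J\to\infty$; the full claim then follows by summing the $N$ termwise limits.

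First I would fix $k$ and write the saved statistics explicitly as $\hat\mu_k^{(J)} = \tfrac{1}{J}\sum_{j=1}^J \tilde x^{(j)}[k]$ and $(\hat\sigma_k^{(J)})^2 = \tfrac{1}{J}\sum_{j=1}^J (\tilde x^{(j)}[k]-\hat\mu_k^{(J)})^2$, so that $\gamma^{(J)}_k(\tilde x[k]) = (\tilde x[k]-\hat\mu_k^{(J)})/\hat\sigma_k^{(J)}$. The key structural point is that under the train/save/inference protocol described before the lemma, the inference argument $\tilde x[k]$ is a fresh draw independent of the training data that produced the saved statistics. Conditioning on the training data, hence treating $\hat\mu_k^{(J)}$ and $\hat\sigma_k^{(J)}$ as constants, I expand
\begin{align}
\mathbb{E}\big[|\gamma^{(J)}_k(\tilde x[k])|^2\big] = \frac{\mathbb{E}[\tilde x[k]^2] - 2\hat\mu_k^{(J)}\mu_k + (\hat\mu_k^{(J)})^2}{(\hat\sigma_k^{(J)})^2},
\end{align}
where $\mu_k = \mathbb{E}[\tilde x[k]]$. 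Because $\tilde x[k]=\phi(\cdot)$ is a hyperbolic-tangent output and thus bounded in $(-1,1)$, its mean $\mu_k$ and variance $\sigma_k^2$ are finite and the strong law of large numbers gives $\hat\mu_k^{(J)}\xrightarrow{a.s.}\mu_k$ and $(\hat\sigma_k^{(J)})^2\xrightarrow{a.s.}\sigma_k^2$. Substituting $\mathbb{E}[\tilde x[k]^2]=\sigma_k^2+\mu_k^2$ and applying the continuous mapping theorem to the ratio, the numerator converges to $\sigma_k^2+\mu_k^2-2\mu_k^2+\mu_k^2=\sigma_k^2$ and the denominator to $\sigma_k^2$, yielding the limit $1$ almost surely.

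The main obstacle is the degenerate case $\sigma_k^2=0$, in which dividing by $\hat\sigma_k^{(J)}$ is ill-posed and the ratio is discontinuous at the limit point, so the continuous mapping step breaks down. I would dispatch this by assuming $\sigma_k^2>0$ for every $k$ (equivalently, that the non-linear layer output is non-constant across channel draws), which holds generically and, in practice, is enforced by the small additive constant standard in normalization layers; under this assumption the limiting denominator is strictly positive and the ratio is continuous there, making the almost-sure convergence valid. A secondary point worth stating carefully is the independence of the inference draw from the saved statistics, which is precisely what licenses pulling $\hat\mu_k^{(J)}$ and $\hat\sigma_k^{(J)}$ out of the inference expectation as constants; without this separation the cross terms would not collapse as shown.
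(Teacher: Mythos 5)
Your proposal is correct and takes essentially the same route as the paper's proof: factor out the deterministic power weights using $\sum_{k=1}^N w_k^2 = N$, reduce the claim to showing each normalized second moment $\mathbb{E}\big[|\gamma^{(J)}_k(\tilde x[k])|^2\big]$ tends to $1$, and conclude via the strong law of large numbers applied to the saved sample mean and variance together with continuity of the resulting ratio. Your explicit treatment of the degenerate case $\sigma_k^2 = 0$ and of the independence between the fresh inference draw and the saved training statistics spells out assumptions the paper leaves implicit, but does not change the substance of the argument.
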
 
\begin{proof}
See Appendix~\ref{sec:app:proof_Lemma1}.
\end{proof}

\begin{remark}
    Because of the data-dependent normalization in generating ${x}[k]$ in \eqref{eq:norm_power},
$\mathbb{E}_{{\bf b}, {\bf n}_1, {\bf n}_2 } \big[ \sum_{k=1}^N (  {x}[k] )^2 \big] $ is a random sequence along $J$ in our implementation.
We note that our neural network builds resiliency to the small size of inference data  by using the saved mean and variance (rather than calculating them with the batch of inference data).
\end{remark}

\subsection{Decoding}
\label{ssec:decoding}

As shown in Figure \ref{fig:overall}, our decoding function $g$ consists of the bi-directional GRUs, the attention layer, and the non-linear layer. We discuss each of them in detail below.



\textbf{Bi-directional GRU.}
We utilize two layers of bi-directional GRUs to capture the time correlation of the receive signals both in the forward and backward directions. 
We represent the input-output relationship of the forward directional GRUs at time $k$ as
\begin{align}
    {\bf r}_{\text{f},1}[k] & = \text{GRU}_{\text{f},1} (y[k],{\bf r}_{\text{f},1}[k-1] ),
    \nonumber
    \\
    {\bf r}_{\text{f},2}[k] & = \text{GRU}_{\text{f},2} ({\bf r}_{\text{f},1}[k],{\bf r}_{\text{f},2}[k-1] ),
    \label{eq:dec:rf1}
\end{align}
and that of the backward directional GRUs as
\begin{align}
    {\bf r}_{\text{b},1}[k] & = \text{GRU}_{\text{b},1} (y[k],{\bf r}_{\text{b},1}[k+1] ),
    \nonumber
    \\
    {\bf r}_{\text{b},2}[k] & = \text{GRU}_{\text{b},2} ({\bf r}_{\text{b},1}[k], {\bf r}_{\text{b},2}[k+1] ),
\end{align}
where $\text{GRU}_{\text{f},i}$ and $\text{GRU}_{\text{b},i}$ represent functional forms of GRU at layer $i$ in the forward and backward direction, respectively.
Here, ${\bf r}_{\text{f}, i}[k] \in \mathbb{R}^{N_{\text{r},i}}$ and ${\bf r}_{\text{b}, i}[k] \in \mathbb{R}^{N_{\text{r},i}}$ are the state vectors obtained by $\text{GRU}_{\text{f}, i}$ and $\text{GRU}_{\text{b}, i}$, respectively, at time $k$, where $i=1,2$ and $k=1,...,N$. 
For the initial conditions,
${\bf r}_{\text{f},i}[-1]= {\bf 0}$ and ${\bf r}_{\text{b},i}[N+1]= {\bf 0}$, $i=1,2$.




\textbf{Attention layer.}
We consider the state vectors at the last layer, i.e., ${\bf r}_{\text{f},2}[k]$ and ${\bf r}_{\text{b},2}[k]$, over $k=1,...,N$, as inputs to the attention layer.
Each state vector contains different feature information depending on both its direction and time-step $k$: The forward  state vector ${\bf r}_{\text{f},2}[k]$ captures the implicit correlation information of the receive signals of $y[1], ..., y[k]$, while the backward state vector ${\bf r}_{\text{b},2}[k]$ captures that of $y[k], ..., y[N]$, $k=1,...,N$. Although the state vectors at each end, i.e., ${\bf r}_{\text{f},2}[N]$ and ${\bf r}_{\text{b},2}[1]$, contain the information of all receive signals, the long-term dependency cannot be fully captured~\cite{bengio1993problem}. Therefore, we adopt the attention layer \cite{bahdanau2014neural}.
Formally,
\begin{align}
    {\bf r}_\text{f,att} &=  \sum_{k=1}^N \alpha_{\text{f},k} {\bf r}_{\text{f},2}[k],
    \quad
    {\bf r}_\text{b,att} =  \sum_{k=1}^N \alpha_{\text{b},k} {\bf r}_{\text{b},2}[k],
    \label{eq:dec:att_weight}
\end{align}
where $\alpha_{\text{f},k} \in \mathbb{R}$ and $\alpha_{\text{b},k} \in \mathbb{R}$ are the trainable \textit{attention weights} applied to the forward and backward state vectors, respectively, $k=1,...,N$.
To capture the forward and backward directional information separately, we stack the two vectors, leading to
\begin{align}
    {\bf r}_\text{att} = [ {\bf r}^T_\text{f,att}, {\bf r}^T_\text{b,att} ]^T.
    \label{eq:dec:att_vector}
\end{align}
In this separated encoder-decoder architecture,
the attention mechanism at the decoder enables the decoder to fully exploit the noisy  signal information $\{y[k]\}_{k=1}^N$.

\textbf{Non-linear layer.}
At the end of the decoder, we utilize a non-linear layer to finally obtain the estimate $\hat {\bf b}$ by using the feature vector ${\bf r}_\text{att}$ in \eqref{eq:dec:att_vector}.
The input-output relationship at the non-linear layer is given by
\begin{align}
    \hat {\bf d} =  \theta ({\bf W}_d {\bf r}_\text{att}+ {\bf v}_d)  
    \label{eq:decoder:softmax}
\end{align}
where $\theta: \mathbb{R}^{2N_{\text{r},2}} \rightarrow \mathbb{R}^M$ is an activation function, and  ${\bf W}_d \in \mathbb{R}^{ M \times 2N_{\text{r},2}}$ and ${\bf v}_d \in \mathbb{R}^{M}$ are the trainable weights and biases, respectively. In this work, we consider softmax  function for $\theta$ and set to $M=2^K$.
Then, $\hat {\bf d} \in \mathbb{R}^{2^K}$ denotes the probability distribution of $2^K$ possible outcomes of $\hat {\bf b}$.



\textbf{Model training and inference.} For training, we consider the cross entropy (CE) loss $\text{CE}({\bf d}, \hat {\bf d}) =
- \sum_{i=1}^{2^K} d_i \log {\hat d}_i$, where ${\bf d} \in \{0,1\}^{2^K}$ is the one-hot representation of~${\bf b} \in \{0,1\}^{K}$, and $d_i$ and ${\hat d}_i$ are the $i$-th entry of ${\bf d} $ and $\hat {\bf d}$, respectively.
For inference, we
 force the entry with the largest value of $\hat {\bf d}$ to 1, while setting the rest   entries to $0$, 
 and then map the obtained one-hot vector to a bit stream vector $\hat {\bf b}$.
 By treating the entire bit stream as a block through the use of one-hot vectors,
we transform our problem of minimizing BLER, i.e., $\text{Pr}[ {\bf b} \neq \hat {\bf b} ]$ in \eqref{opt:obj:state-based}, into a classification~problem.

\subsection{Modulo Approach for Longer Block Lengths}
\label{ssec:modulo}

A direct application of the proposed coding architecture for long block lengths would be infeasible, since the larger input/output sizes of the encoder and decoder result in an exponential increase in complexity. Instead, we consider a modulo approach for processing a long block length of bits by successively applying our coding architecture created for short block lengths.

We define the long block length as $L$ ($L>K$), while $K$ denotes the number of \textit{processing bits} input to our coding architecture at a time. Formally, we denote the whole bit stream as ${\bf b}_\text{long} \in \{0,1\}^L$ and the index of the bit in ${\bf b}_\text{long}$ as $\ell$, $\ell=0,...,L-1$. We consider that our feedback coding architecture has been trained with block length $K$. To process the long block length, we divide the $L$ bits into $\lfloor L/K \rfloor$ chunks, and each chunk with length $K$ is then processed with our coding architecture in a time-division manner. Formally, each chunk with length $K$ can be represented with the modulo operation as
$\big[ {\bf b}_\text{long}[\lfloor \ell/K \rfloor], 
    {\bf b}_\text{long}[\lfloor \ell/K \rfloor+1], ..., 
    {\bf b}_\text{long}[\lfloor \ell/K \rfloor+K-1] \big]$.

This modulo-based approach gives two distinct benefits. First, it reduces the complexity of the network structure by simplifying the encoding and decoding processes through successive applications of the neural network trained for a shorter block length. Second, it allows generalization across various block lengths (multiple of $K$ bits) without necessitating re-training. Our experiments in Section~\ref{ssec:sim:long} show that this approach yields substantial improvements over baseline feedback coding approaches for long block~lengths.

Note that our length-$K$ coding architecture obtains a block length gain, but the modulo approach does not provide additional block length gain beyond the length-$K$ trained neural network. The block length gain of our length-$K$ coding architecture is investigated in Appendix~\ref{sec:app:gain}.


\subsection{Computational Complexity Analysis}

We first investigate the computational complexity of our length-$K$ coding architecture for encoding/decoding $K$ bits over $N$ channel uses.
We consider the number of layers of GRUs at the encoder and decoder as $N_{e,\text{layer}}$ and $N_{d,\text{layer}}$, respectively. We assume that the same number of neurons, $N_e$ and $N_d$, is used at each layer of the encoder and decoder.
Then, the encoder and decoder of our approach will have computational complexities of $\mathcal{O}(N N_{e,\text{layer}} N_e^2   + N K N_e )$ and $\mathcal{O}(N N_{d,\text{layer}} N_d^2  + 2^K N_d )$, respectively.
We then look into the complexity for encoding/decoding $L$ ($L>K$) bits by using the length-$K$ coding architecture based on the modulo approach discussed in Section~\ref{ssec:modulo}.
The corresponding complexities for encoding and decoding are
$\mathcal{O}( \lfloor L/K \rfloor  (N N_{e,\text{layer}} N_e^2   + N K N_e) )$ and 
$\mathcal{O}( \lfloor L/K \rfloor (N N_{d,\text{layer}} N_d^2  + 2^K N_d) )$, respectively.

Under Deepcode's architecture with a single layer of RNN at the encoder and two layers of bi-GRUs at the decoder, we obtain the big-O complexities of Deepcode to be
$\mathcal{O}(K N_e^2 )$ and $\mathcal{O}(K N_d^2 )$ for encoding and decoding $K$ bits, respectively.
For encoding/decoding $L$ ($L>K$) bits, the big-O complexities are $\mathcal{O}( L N_e^2 )$ and $\mathcal{O}(L N_d^2 )$, respectively.
For encoding/decoding $K$ bits over $N$ channel uses, the linear CL coding scheme only requires $\mathcal{O}(N^2)$ and $\mathcal{O}(N)$ at the encoder and decoder, respectively.
For processing $L$ ($L>K$) bits with the CL scheme, the big-O complexities are $\mathcal{O}( \lfloor L/K \rfloor  N^2 )$ and $\mathcal{O}( \lfloor L/K \rfloor  N )$ at the encoder and decoder, respectively.
On the other hand,
for the error correction codes, decoding often imposes high computation overhead.
The complexity of turbo coding generally increases faster than linearly with the block length $L$. 
For instance, the BCJR (Bahl-Cocke-Jelinek-Raviv)  algorithm for turbo decoding~\cite{berrou1996near}
typically has complexity of $\mathcal{O}(I (LN/K)^2 )$, 
where $I$ is the number of iterations and usually larger than 10.

It is also important to note that the computations in our scheme can be parallelized, since (i) the encoding/decoding are mostly composed of matrix calculations and (ii) multiple chunks of $K$ bits are processed in a time-division manner via the modulo approach.
Overall, given the trade-off between performance and complexity, we can consider our feedback coding to improve the communication reliability at the expense of higher complexity compared with linear schemes and Deepcode. 
A detailed discussion on performance-complexity tradeoff is provided in Appendix~\ref{sec:app:layer}.

\section{Experimental Results}

We now present numerical experiments to validate our methodology. We measure noise power and signal to noise ratio (SNR) in decibels (dB). For brevity, details on our training procedure are relegated to Appendix~\ref{sec:app:training}. 


\subsection{Baselines}
We consider several baselines including state-of-the-art feedback schemes and well-known error correction codes.

\textbf{Repetition coding}: Each bit of ${\bf b} \in \{0,1\}^{K}$ is modulated with binary phase-shift keying (BPSK) and transmitted repetitively over $N/K$ channel uses. 

\textbf{TBCC}~\cite{ma1986tail}: We consider tail-biting convolutional coding (TBCC), adopted in LTE standards~\cite{LTEstandard} for short blocklength codes. We consider the trellis with $(7, [133,171,165])$
and BPSK modulation.

\textbf{Turbo coding}~\cite{berrou1996near}: We consider turbo codes, adopted in LTE standards for medium/long blocklength codes. 
We consider the trellis with $(4,[13, 15])$, BPSK  modulation, and $10$ decoding iterations.

\textbf{CL scheme with $2^B$-ary PAM}~\cite{chance2011concatenated}: 
The bit stream ${\bf b}$ is first divided into $K/B$ bit chunks each with length $B$. Each bit chunk is modulated with $2^B$-ary pulse amplitude modulation (PAM), and then 
transmitted over $NB/K$ channel uses with the CL scheme.

\textbf{Deepcode}~\cite{kim2018deepcode}: This is a non-linear feedback coding scheme that uses RNNs to encode/decode the bit stream {$\bf b$} based on a bit-by-bit processing. 

\textbf{TBCC with CL}: This is a concatenated coding approach with TBCC for outer coding and the CL scheme for inner coding. The bit stream ${\bf b}$ is encoded with TBCC to generate the outer code ${\bf c} \in \{0,1\}^{K/r_\text{out}}$ with outer coding rate $r_\text{out}=1/2$ and the trellis with $(7,[133,171])$. ${\bf c}$ is modulated with $2^B$-ary PAM where $B=2$,
and each symbol is 
transmitted with the CL scheme over $Nr_\text{out}B/K$
channel uses.

\textbf{Turbo with CL}:
This is a concatenated coding approach where turbo coding is used for outer coding.
For turbo coding, we consider $r_\text{out}=1/3$, the trellis with  $(4,[13, 15])$, and $10$ decoding iterations.  For the CL scheme as inner coding, 
we consider $2^B$-ary PAM where $B=2$. 



\subsection{Short Blocklength}
\label{ssec:sim:short}

\begin{figure}[t]
  \includegraphics[width=.8\linewidth]{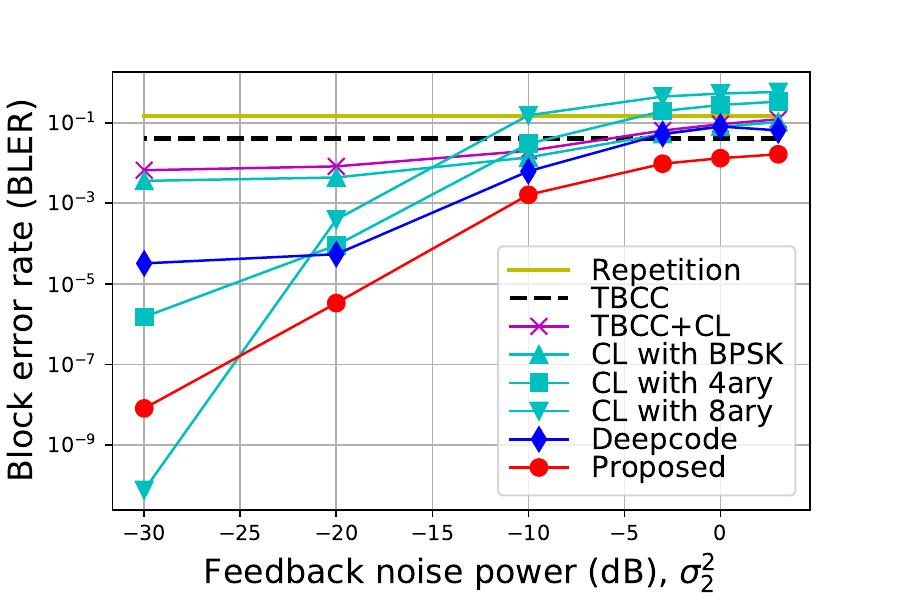}
  \centering
  \vspace{-2mm}
  \caption{BLER with short blocklength $K=6$ and rate $r=1/3$ when the forward SNR is $1$dB. Our feedback codes outperform the counterparts and demonstrate robustness to high feedback noise.}
  \label{fig:bler_K6N18}
  \vspace{-2mm}
\end{figure} 

We first investigate our approach under a short blocklength regime.
Figure \ref{fig:bler_K6N18} shows BLER curves along varying feedback noise powers, where $K=6$ and $N=18$ with rate $r=1/3$ and a forward SNR of 1dB ($\sigma_1^2 =0.794$).
Due to the lack of feedback usage, the repetition coding and TBCC provide constant BLERs along the feedback noise powers,
whereas the feedback schemes, such as Deepcode, CL, and our scheme, perform better as the feedback becomes less noisy.
Over all reasonable feedback noise regions, our scheme outperforms the alternatives, usually by more than 5dB.
Importantly, as the feedback noise increases to higher levels, i.e., $\sigma_2^2 \ge -5$dB, 
our scheme still improves the BLER performance by several dB when compared to TBCC, whereas other feedback codes are shown to be much more vulnerable to high feedback noise.
This validates the improvement in resilience to feedback noise that our scheme achieves. 
When the feedback noise becomes extremely large, our scheme behaves like an open-loop code rather than feedback codes, further demonstrated in Appendix~\ref{ssec:app:adaptability}.
Also, the results with various selections of $K$ and $r$ are consistent with the ones obtained from Figure~\ref{fig:bler_K6N18}, which are included in Appendix~\ref{ssec:app:short}.

\subsection{Extension to Medium/Long Blocklength}
\label{ssec:sim:long}

\begin{figure}[t]
  \includegraphics[width=.8\linewidth]{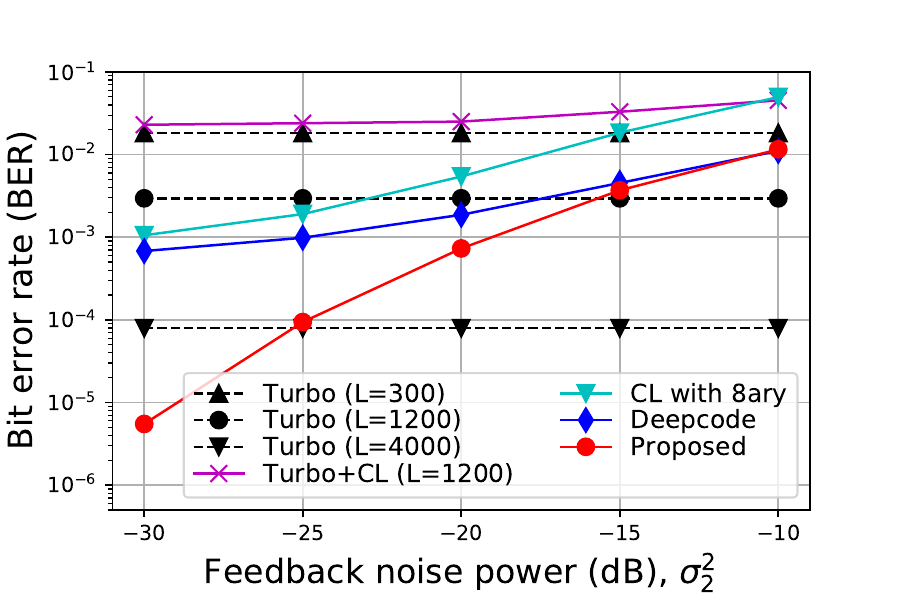}
  \centering
  \vspace{-2mm}
  \caption{BER with medium/long blocklength $L$ and rate $r=1/3$ when the forward SNR is low ($-1$dB). Our feedback codes significantly improve BER performances compared to other feedback schemes over a wide range of feedback noise powers.}
  \label{fig:ber_long2}
  \vspace{-2mm}
\end{figure}

\begin{figure}[t]
  \includegraphics[width=.8\linewidth]{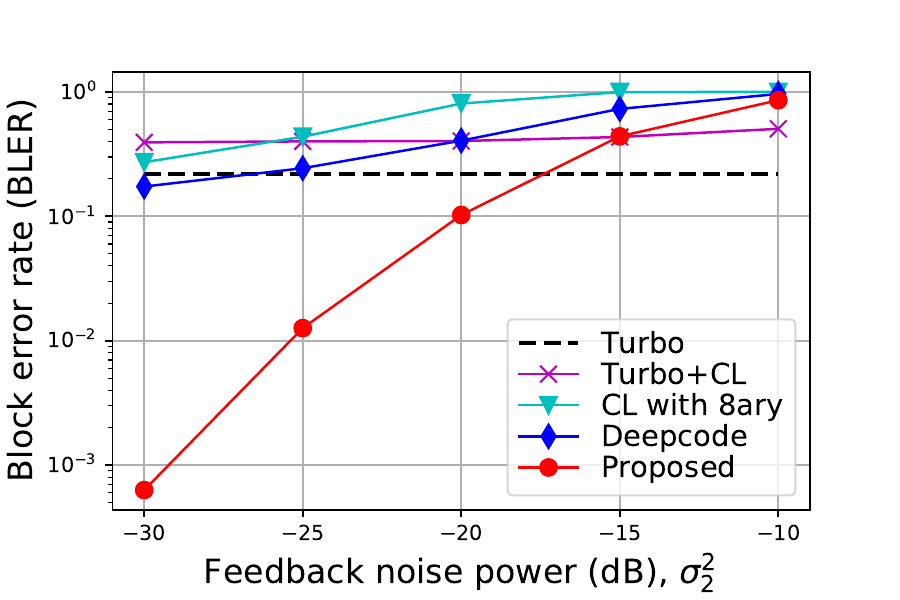}
  \centering
  \vspace{-2mm}
  \caption{BLER with medium blocklength $L=300$ and rate $r=1/3$ when the forward SNR is $-1$dB. 
  Our scheme provides a greatly enhanced region where turbo codes are outperformed.}
  \label{fig:bler_medium_L300}
  \vspace{-2mm}
\end{figure} 


We next consider a medium/long blocklength regime with rate $r=1/3$ and blocklength $L$.
%
The feedback schemes can be straightforwardly extended to the medium/long blocklength codes in a time division manner; a total of $L$ bits are divided into $L/K$ chunks of bits, and each chunk with length $K$ is processed with feedback schemes.\footnote{Although our feedback scheme can be used as inner coding for concatenated coding with a flexible inner coding rate, this strategy does not yield satisfactory results in BER and BLER for $r=1/3$. 
For Deepcode, outer coding is not allowed for rate $1/3$.}
We consider
our scheme with $K=6$ and Deepcode with $K=10$.\footnote{For Deepcode, other choices of $K$ yield similar performances.}

Figure~\ref{fig:ber_long2} shows bit error rate (BER) performances in the high forward noise scenario 
where the forward SNR is $-1$dB ($\sigma_1^2 = 1.26$).
Due to the time division processing of the feedback schemes including our scheme, Deepcode, and CL, the obtained BERs are the same for different choices of blocklength $L$, while
turbo coding benefits from the longer blocklength.
Given blocklength $L$, we identify a feedback noise power threshold, below which  feedback schemes outperform turbo coding. 
We observe that, in this high forward noise scenario, our scheme benefits significantly  from lower feedback noise power, while other feedback codes do not. 
%

Figure~\ref{fig:bler_medium_L300} shows BLERs with medium blocklength $L=300$ where the forward SNR is $-1$dB. 
Our scheme provides a better threshold while outperforming the other feedback techniques by more than two orders of magnitude in BLER.
Consistent results are observed over a variety of selections of $L$, which are presented in Appendix~\ref{ssec:app:medium}.
Nevertheless, in a very long blocklength regime, e.g., $L=4000$, turbo code may be still preferable to feedback codes unless the feedback noise is low enough, as shown in Figure~\ref{fig:ber_long2}.

\subsection{Power Control in Encoding}
\label{ssec:sim:power}

\begin{figure}[t]
  \includegraphics[width=.8\linewidth]{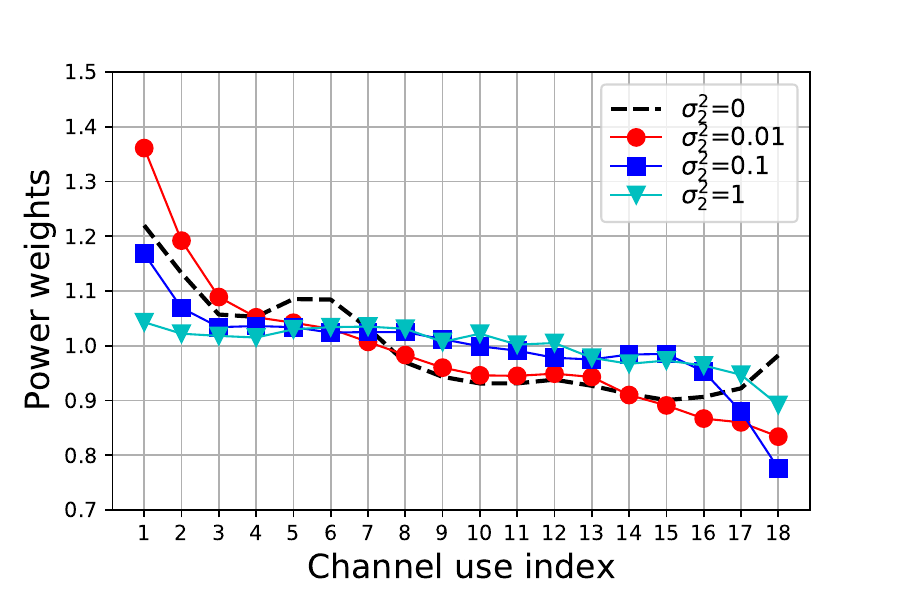}
  \centering
  \vspace{-2mm}
  \caption{Transmit power distribution over $N=18$ channel uses. The power values are larger at the beginning, and decrease along the channel uses, which aligns with the result of the CL scheme.
  }
  \label{fig:power_dist}
  \vspace{-2mm}
\end{figure} 

Figure~\ref{fig:power_dist} shows the power weights obtained by our scheme, $\{w_k\}_{k=1}^N$ in \eqref{eq:norm_power}, over $N=18$ channel uses,  under different feedback noise power scenarios.
We consider $K=6$, $r=1/3$, and a forward SNR of 1dB.
When feedback is available, the power allocation along the channel uses becomes uneven, with more powers being allocated at the start and less being distributed along later channel uses.
This result aligns with the power distribution obtained by the optimized CL linear feedback coding scheme under noisy feedback.
This shows that regardless of whether the encoding and decoding processes are linear or non-linear, more powers should be allocated at the beginning to maximize the use of the feedback information under noisy feedback.
The power distribution obtained by the CL scheme and further discussion on these points are included in Appendix~\ref{sec:app:power_dist}. 



\vspace{-2mm}
\subsection{Attention Mechanism in Decoding}
\label{ssec:sim:attention}
\vspace{-1mm}

Figure~\ref{fig:attention_weights} shows the attention weights for forward and backward directions, $\{\alpha_{\text{f},k}\}_{k=1}^N$ and $\{\alpha_{\text{b},k}\}_{k=1}^N$ in \eqref{eq:dec:att_weight},
along $N=18$ channel uses under various feedback noise power scenarios.
We consider $K=6$, $r=1/3$, and a forward SNR of 1dB.
For the low noise power scenarios, i.e., $\sigma_2^2=0.01$ or $0.1$, there is an overlap of the non-zero weight regions for forward and backward directions. This implies that all the receive signals, $y[1]$, ..., $y[N]$, are utilized at the decoder since $\alpha_{\text{f},k}>0$ means that the decoder captures the features of the receive signals, $y[1]$, ..., $y[k]$, through the forward GRUs, while $\alpha_{\text{b},k}>0$ means that the decoder captures the features of $y[k]$, ..., $y[N]$ through the backward GRUs.
On the other hand, with high feedback noise, i.e., $\sigma_2^2=1$, we have $\alpha_{\text{f},k} = \alpha_{\text{b},k} \approx 0$ at $k=7$, indicating that the receive signal $y[7]$ would not be used at the decoder. However, it is important to note that
some information of $y[7]$ is  contained in the next-time signals, $y[k]$, $k=8,9,...,N$ due to the causal encoding process that uses feedback signals as an input.
As a result, for all noise conditions, it is expected that the decoder will attempt to  fully leverage the correlation of the receive signals to
reconstruct the original bit stream.

\begin{figure}[t]
  \includegraphics[width=.8\linewidth]{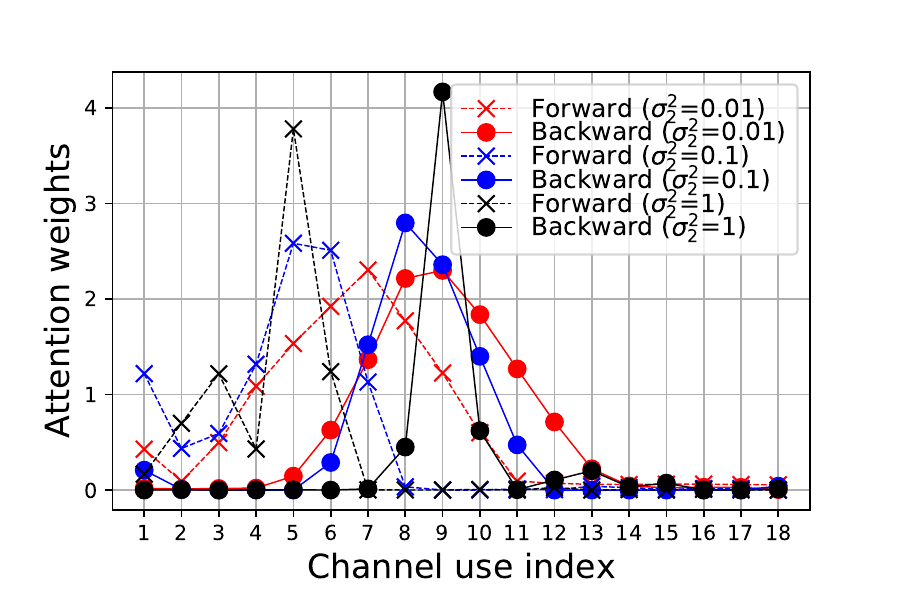}
  \centering
  \vspace{-2mm}
  \caption{Bi-directional attention weights along channel uses. Through the bi-directional GRUs and attention weights, the decoder aims to fully leverage the receive signals.}
  \label{fig:attention_weights}
  \vspace{-2mm}
\end{figure}

\begin{table}[t]
\caption{BLER performances without and with the attention layer.}
\centering
\scalebox{0.75}{
\begin{tabular}{
|c||c|c|c|c|c| }
 \hline
 & without attention layer & with attention layer \\
 \hline
 $\sigma_2^2=0.001$ &  5.25E-8 &  \textbf{8.12E-9} \\
 \hline
 $\sigma_2^2=0.005$ &  2.05E-6 &  \textbf{4.63E-7} \\
 \hline
 $\sigma_2^2=0.01$ & 4.19E-6 & \textbf{3.34E-6} \\
 \hline
 $\sigma_2^2=0.05$ &  7.15E-4 &  \textbf{4.46E-4} \\
 \hline
 $\sigma_2^2=0.1$  & 1.97E-3 & \textbf{1.63E-3} \\
 \hline
$\sigma_2^2=1$     & 1.46E-2 & \textbf{1.32E-2} \\
 \hline
\end{tabular}}
\label{table:attention}
\vspace{-2mm}
\end{table}

Table~\ref{table:attention} shows the BLER performances as the feedback noise  varies when the attention layer at the decoder is used and not used.
The attention layer provides an improvement in BLER in each noise scenario, but particularly when the feedback noise is small, e.g., a 6$\times$ reduction when $\sigma_2^2=0.001$. The attention layer captures time correlation statistics of the GRU outputs, which is shown to be more effective when the feedback channel is less noisy.

Appendix~\ref{sec:app:ablation} contains a number of ablation studies and simulations performed under various configurations, which are consistent with the findings in this section. Furthermore, Appendix~\ref{ssec:app:channel_effects} include additional experiments for realistic channel environments, including delayed feedback channels, quantized feedback channels, and Rician fading channels.

\vspace{-2mm}
\section{Conclusion}
\vspace{-2mm}

In this work, we presented a new class of non-linear feedback codes that significantly increase robustness against channel noise via a novel RNN autoencoder architecture.
Our learning architecture addressed the challenges of encoder-decoder separation over a noisy channel and inefficient power allocation.
To overcome encoder-decoder separation, we processed the entire bit stream as a single unit to benefit from noise averaging, and adopted a bi-directional attention-based decoding architecture to fully exploit correlations among noisy receive signals.
For power optimization, we introduced a power
control layer at the encoder, and proved that the power constraint is satisfied asymptotically.
Through numerical experiments, we demonstrated that under realistic forward/feedback noise regimes, our scheme
outperforms state-of-the-art feedback codes significantly.
We also provided information-theoretic insights on the power distribution of our non-linear feedback codes, showing that allocated power decreases over time.
%

One other important observation we made is that canonical error correction codes still outperform feedback schemes when the feedback noise becomes high in a long blocklength regime. 
In future work, we can further investigate architectural innovations to improve performance in the high noise and very long block length regime, with a more thorough comparison to canonical error correction codes.
Furthermore, it will be interesting to  look into the joint characteristics of our feedback codes in terms of error correction coding and feedback coding as mentioned in Appendix~\ref{ssec:app:adaptability} to adapt to noise conditions.

\vspace{-2mm}
\section*{Acknowledgements}
\vspace{-2mm}

This research was supported in part by NSF CNS2212565, NSF CNS2146171, NSF CNS2225577, NSF ITE2226447, and ONR N000142112472. 
We also thank the anonymous reviewers for their helpful feedback.

\clearpage

\bibliography{example_paper}
\bibliographystyle{icml2023}

\newpage

\appendix
\onecolumn

\section{Proof of Lemma 1}
\label{sec:app:proof_Lemma1}
\begin{proof}
    Define the training data tuples as $\{\mathcal{T}_j\}_{j=1}^J$ where $\mathcal{T}_j = \{\hat {\bf b}^{(j)}, \hat {\bf n}_1^{(j)}, \hat {\bf n}_2^{(j)} \}$.
    Let us denote $\tilde \eta_j[k]$ as the output at timestep $k$ generated by data $j$, $\mathcal{T}_j$, through the encoding process of \eqref{eq:s1}-\eqref{eq:enc:non-linear}, $k=1,...,N$.
    It is obvious that $\tilde \eta_j[k]$ is independent and identically distributed (i.i.d.) over $j$ assuming the data tuples $\{\mathcal{T}_j\}_{j=1}^J$ are i.i.d. from each other. We define the mean and variance of $\tilde \eta_j[k]$ as $\mu_k$ and $\nu^2_k$,  respectively.
    With the sample mean 
    $m_k(J) = \frac{1}{J}\sum_{j=1}^{J} \tilde {\eta}_j[k]$ and the sample variance 
    $\delta^2_k(J) = \frac{1}{J} \sum_{j=1}^{J} (\tilde {\eta}_j[k] - m_k(J))^2$, we define the normalization function 
    as 
    $\gamma_k^{(J)}(x) = (x-m_k(J))/\delta_k(J)$.
    
    Let us define $\tilde x[k]$ as the output at timestep $k$ generated by the data tuple for inference, $ \{{\bf b}, {\bf n}_1,  {\bf n}_2 \}$.
    Assuming the training and inference data tuples are extracted from the same distribution, the mean and variance of $\tilde x[k]$ are then $\mu_k$ and $\nu^2_k$, respectively. We then have
    $\mathbb{E}_{{\bf b}, {\bf n}_1,  {\bf n}_2} \big[ \big(  \gamma^{(J)}_k(\tilde {x}[k]) \big)^2   \big] = \frac{ \nu^2_k + (m_k(J)-\mu_k)^2}{\delta^2_k(J)}$. By strong law of large number (SLLN), $m_k(J) \rightarrow \mu_k$ and $\delta^2_k(J) \rightarrow \nu^2_k$ almost surely (a.s.) as $J \rightarrow \infty$. 
    Then, by continuous mapping theorem, $\mathbb{E}_{{\bf b}, {\bf n}_1,  {\bf n}_2} \big[ \big(  \gamma^{(J)}_k(\tilde {x}[k]) \big)^2   \big] \rightarrow 1$ a.s. as $J \rightarrow \infty$.
    Then,
$\mathbb{E}_{{\bf b}, {\bf n}_1,  {\bf n}_2} \big[ \sum_{k=1}^N \big(  {x}[k] \big)^2 \big] 
          = \sum_{k=1}^N   w_k^2 \mathbb{E}_{{\bf b}, {\bf n}_1,  {\bf n}_2} \big[ \big( \gamma_k^{(J)} (\tilde {x}[k]) \big)^2 \big]  \rightarrow N$ a.s. as $J \rightarrow \infty$.
\end{proof}

\section{Parameter Setup and Algorithm for Training}
\label{sec:app:training}

The overall algorithm for training is given in Algorithm \ref{alg:training}.
The number of training data is $J = 10^7$, the batch size is $N_{\text{batch}} = 2.5\times 10^4$, and the number of epochs is $N_{\text{epoch}} = 100$.
We use the Adam optimizer and a decaying learning rate, where the initial rate is $0.01$ and the decaying ratio is $\gamma=0.95$ applied for every epoch. 
We also use gradient clipping for training, where the gradients are clipped when the norm of gradients is larger than 1.
We adopt two layers of uni-directional GRUs at the encoder and two layers of bi-directional GRUs at the decoder, with $N_\text{neurons} = 50$ neurons at each GRU. 
We initialize each neuron in GRUs with $U(-1/\sqrt{N_\text{neurons}}, 1/\sqrt{N_\text{neurons}})$, and all the power weights and attention weights to 1.
We train our neural network model under particular forward/feedback noise powers and conduct inference in the same noise environment.

\begin{algorithm}[h!]
   \caption{Training for the proposed RNN autoencoder-based architecture}
   \label{alg:training}
\begin{algorithmic}
   \STATE {\bfseries Input:} Training data
   $\{{\bf b}^{(j)}, {\bf n}_1^{(j)},  {\bf n}_2^{(j)}  \}_{j=1}^{{J}}$, number of epochs $N_\text{epoch}$, and batch size $N_\text{batch}$.
   \STATE {\bfseries Output:} Model parameters
   \STATE Initialize the model parameters.
   \FOR{$e=1$ {\bfseries to} $N_\text{epoch}$}
   \FOR{$t=1$ {\bfseries to} $J/N_\text{batch}$}
    \STATE{Obtain $N_\text{batch}$ data tuples, $\{{\bf b}^{(\ell)}, {\bf n}_1^{(\ell)},  {\bf n}_2^{(\ell)}  \}_{\ell \in \mathcal{I}_{\text{batch}}}$, where  $\mathcal{I}_{\text{batch}}$ denotes the indices of the extracted data tuples with $\vert \mathcal{I}_{\text{batch}} \vert = N_\text{batch}$.}\\
    \STATE{Update the model parameters using the gradient decent on the cross entropy loss, defined by 
        $\sum_{\ell \in \mathcal{I}_{\text{batch}}} \text{CE}({\bf d}^{(\ell)}, \hat {\bf d}^{(\ell)})
        = - \sum_{\ell \in \mathcal{I}_{\text{batch}}} \sum_{i=1}^{2^K} d^{(\ell)}_i \log {\hat d}^{(\ell)}_i$,
    where ${\bf d}^{(\ell)} = \text{one-hot}({\bf b}^{(\ell)})$ is the target vector, and $\hat{\bf d}^{(\ell)}$ is the output obtained by the $\ell$-th data tuple passing through the overall encoder-decoder architecture in 
    \eqref{eq:s1}-\eqref{eq:decoder:softmax}.
    }
   \ENDFOR
   \ENDFOR
\end{algorithmic}
\end{algorithm}

\section{Encoder and Decoder in a Compact Form}
\label{sec:app:compact}


Figure~\ref{fig:compact} shows a compact version of the encoder and decoder, where a unrolled version along timesteps is demonstrated in Figure~\ref{fig:overall}. 
The encoder and decoder are present at the transmitter and receiver, respectively, under the framework of autoencoder architectures. The encoding output $x[k]$ travels through a noisy channel, and the noisy version of $x[k]$, i.e., $y[k]$, is provided as an input to the decoder.
Our goal with this RNN autoencoder-based structure is to jointly train the encoder and decoder.

\begin{figure}[h]
  \centering
\begin{subfigure}{.18\linewidth}
  \centering
  \includegraphics[width=\linewidth]{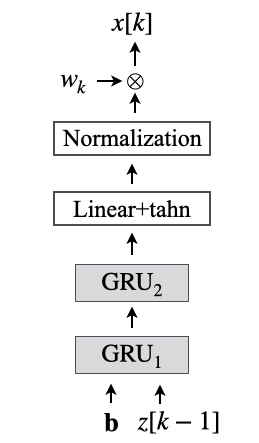}
  \caption{ Encoder
  }
  \label{fig:encoder_compact}
\end{subfigure}
\begin{subfigure}{.18\linewidth}
  \centering
  \includegraphics[width=\linewidth]{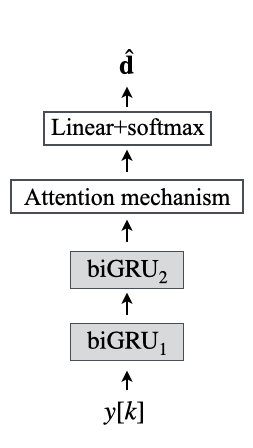}
  \caption{ Decoder
  }
  \label{fig:decoder_compact}
\end{subfigure}
  \caption{Encoder and decoder in a compact structure.}
  \label{fig:compact}
\end{figure}

\section{Evaluations of Our Scheme under Various Scenarios}

\subsection{Adaptability of our method to feedback noise}
\label{ssec:app:adaptability}

Figure~\ref{fig:bler_nofeedback} shows the BLER curves for the region of high feedback noise powers.
We consider $K=6$, $N=18$, and the forward SNR with 1dB.
Our learning architecture shown in Figure \ref{fig:overall} can be modified to generate an open-loop code that does not use any feedback signals at all, by 
providing $x[k]$ as an input to the encoder rather than $z[k]$.
This open-loop code, denoted by Proposed (No feedback) in Figure~\ref{fig:bler_nofeedback}, outperforms TBCC, showing that our learning architecture may be utilized to create error correction codes.
While keeping this as a prospective research area for future works on error correction coding, we are more interested in the behavior of our scheme in terms of feedback utilization in this work.

It is interesting that as the feedback noise power increases,
the performance of our original method (using feedback) converges to the one of our open-loop coding (without feedback utilization).
In contrast, as the feedback becomes less noisy, our scheme beats our open-loop coding scheme  by maximizing the use of the feedback information.
This demonstrates the adaptability of our approach, which may weight feedback coding or open-loop coding differently depending on the feedback noise levels.
In other words, our approach can be potentially understood as a machine learning-based joint optimization method for feedback coding and error correction coding.

\begin{figure}[h]
  \includegraphics[width=.45\linewidth]{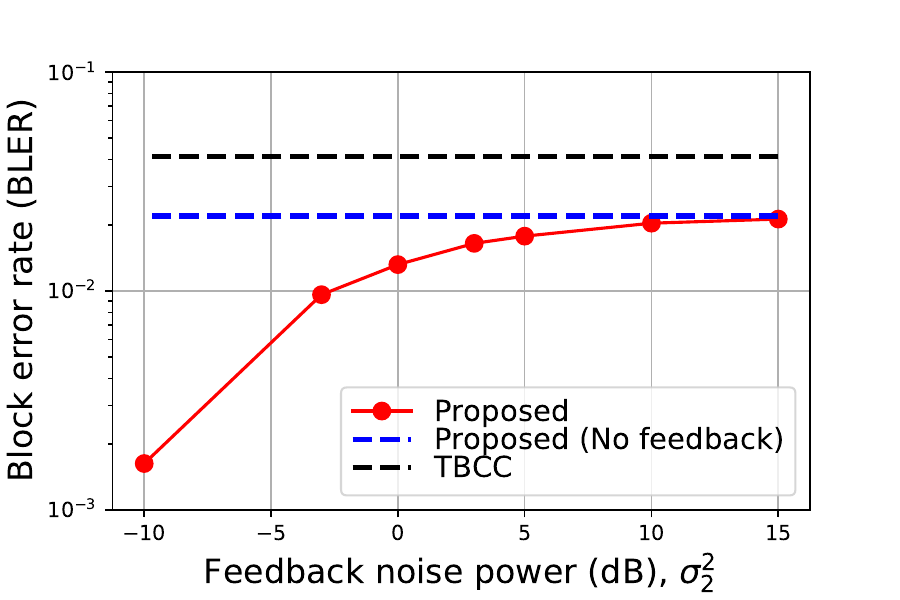}
  \centering
  \caption{BLER with $K=6$ and $N=18$ for high feedback noise, where the forward SNR is $1$dB.}
  \label{fig:bler_nofeedback}
\end{figure}

\subsection{Short blocklength scenarios}
\label{ssec:app:short}

Figure~\ref{fig:bler_K6N21} shows BLER with $K=6$, $N=21$, and the forward SNR with 1dB. 
This is the case when Deepcode benefits from the zero padding where a zero bit is added at the end of the bit stream. That is, Deepcode  utilize $18$ channel uses for transmitting  the signals corresponding to $K=6$ bits, and three channel uses for transmitting the signals corresponding to the padded zero bit.
The coding rate is then $r=2/7$.
To be fair, other schemes can use three more channel uses based on $1/3$-rate coding.
Our scheme can generate $2/7$-rate codes by simply training our RNN autoencoder-based architecture for $K=6$ and $N=21$. 
Although this scenario is favorable for Deepcode, our scheme still outperforms the baselines including Deepcode. 
Our strategy achieves significant performance gains, as we mentioned in Figure~\ref{fig:bler_K6N18}, particularly in the regions with strong feedback noise.
%
Figure~\ref{fig:bler_K10N30} demonstrates 
BLER with $K=10$ and $N=30$ ($r=1/3$). In this scenario, our scheme still outperforms the counterparts, and obtain higher performance gains over all the feedback noise regions.
Figure~\ref{fig:bler_K10N33} shows BLER with $K=10$ and $N=33$ ($r=10/33$), in which Deepcode benefits from using zero padding. Our scheme still performs better than the other baselines in realistic feedback noise power regions as well.





\begin{figure}[t]
    \centering
\begin{subfigure}{.33\linewidth}
  \includegraphics[width=\linewidth]{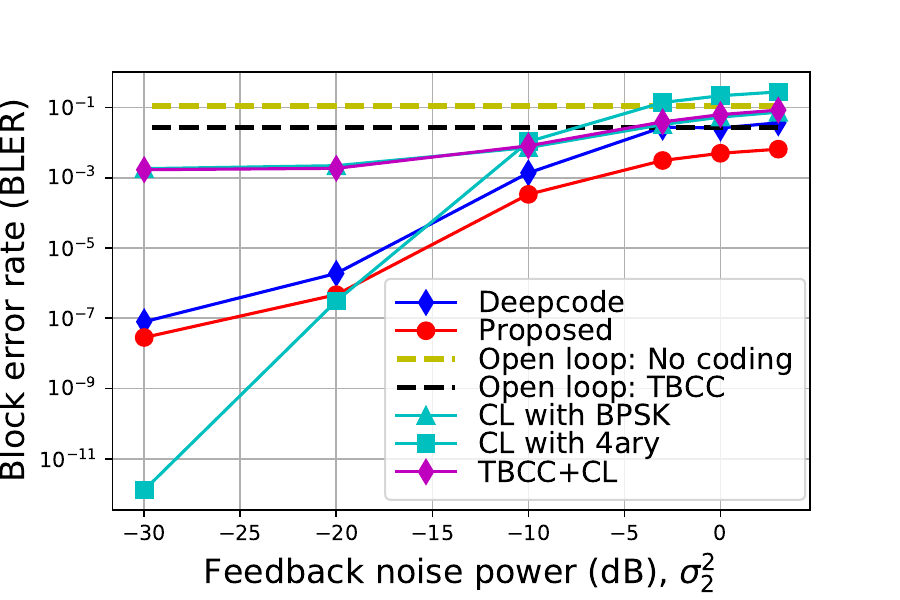}
  \centering
  \caption{$K=6$ and $N=21$ with $r=6/21$.}
  \label{fig:bler_K6N21}
\end{subfigure} 
\begin{subfigure}{.33\linewidth}
  \includegraphics[width=\linewidth]{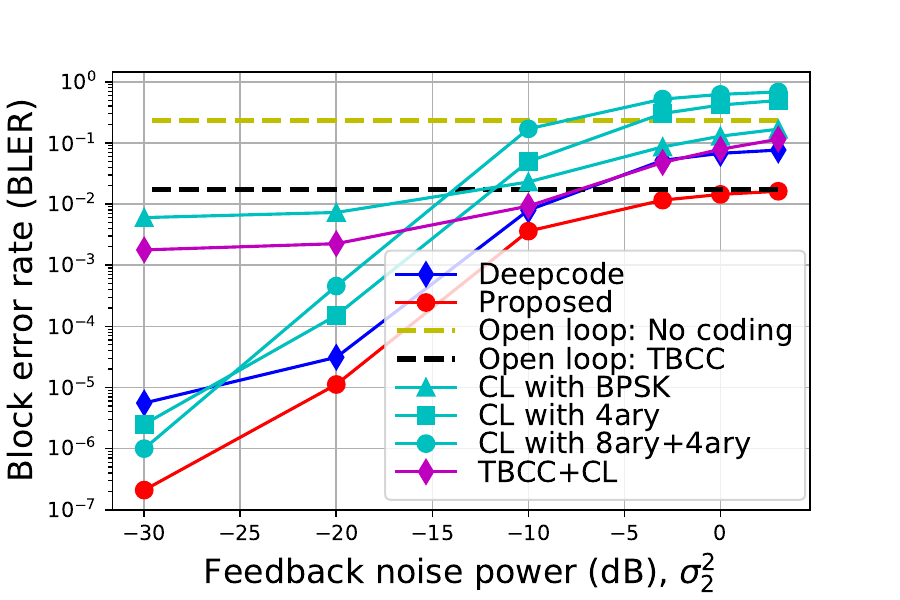}
  \centering
  \caption{$K=10$ and $N=30$ with $r=1/3$.}
  \label{fig:bler_K10N30}
\end{subfigure}
\begin{subfigure}{.33\linewidth}
  \includegraphics[width=\linewidth]{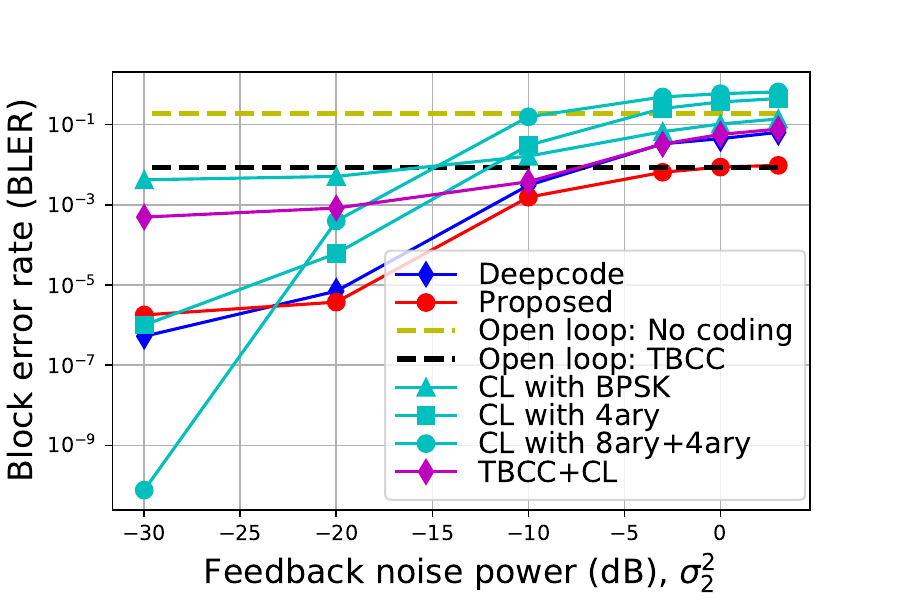}
  \centering
  \caption{$K=10$ and $N=33$ with $r=10/33$.}
  \label{fig:bler_K10N33}
\end{subfigure}
\caption{BLER curves with various selections of $K$ and $N$, where the forward SNR is $1$dB.}
\label{fig:bler_more}
\end{figure}



\subsection{Medium/long blocklength scenarios}
\label{ssec:app:medium}


Figure~\ref{fig:bler_L_60600} depicts BLER with different blocklength scenarios of $L=120, 600, 1200$,
where the forward SNR is $-1$dB and the coding rate is $r=1/3$.
Due to our scheme's tolerance to strong forward noise, it beats existing feedback methods in BLER for any blocklength scenarios by up to two orders of magnitude.
For long blocklength scenarios, such as $L=600, 1200$ in Figure~\ref{fig:bler_medium_L600} and \ref{fig:bler_medium_L1200}, any other feedback schemes other than ours cannot outperform canonical turbo coding.
However, our scheme outperforms the turbo coding depending on the feedback noise power levels. This demonstrates that, when feedback is available in  communication systems, our scheme can be used to improve the BLER performances by exploiting feedback.





\begin{figure}[h]
\centering
\begin{subfigure}{.33\linewidth}
  \includegraphics[width=\linewidth]{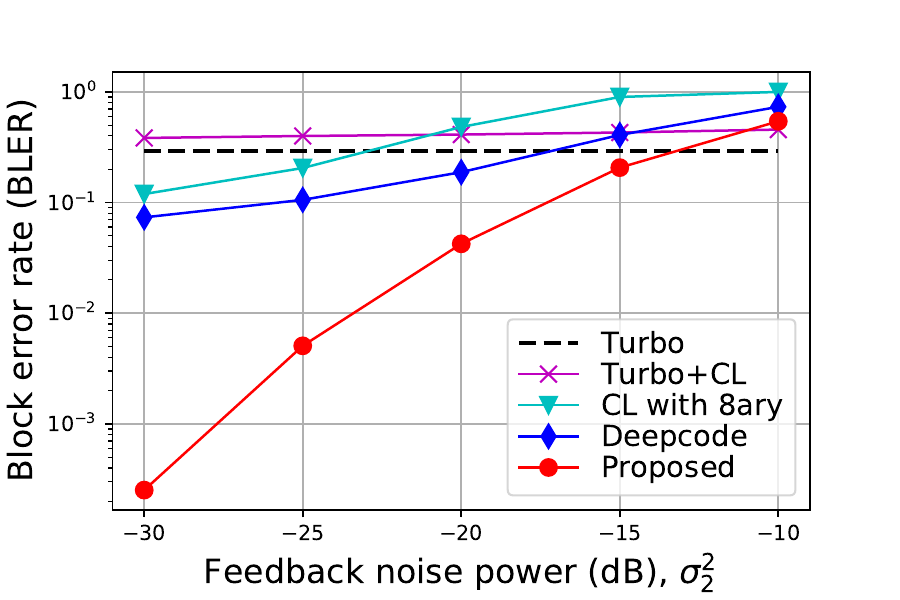}
  \centering
  \caption{$L=120$.}
  \label{fig:bler_medium_L120}
\end{subfigure} 
\begin{subfigure}{.33\linewidth}
  \includegraphics[width=\linewidth]{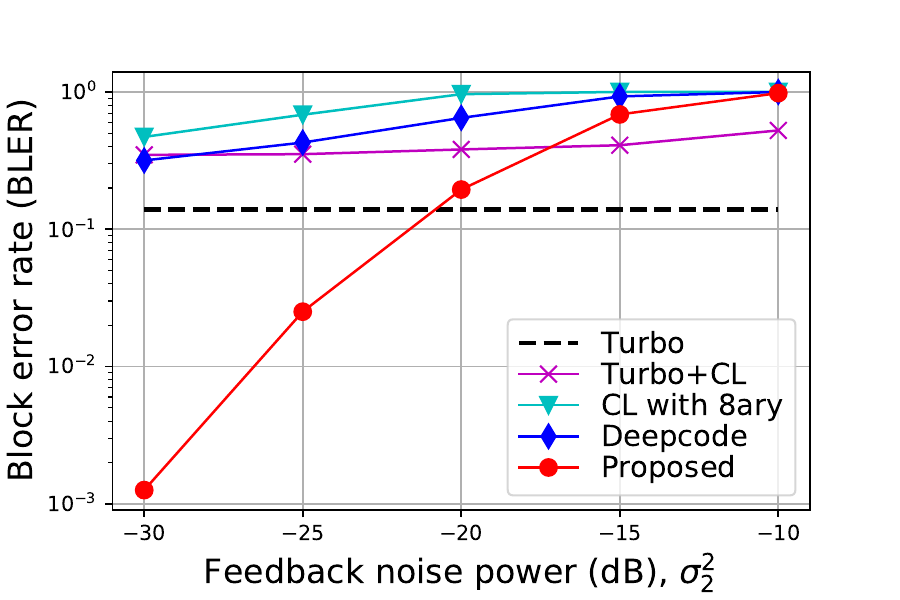}
  \centering
  \caption{$L=600$.}
  \label{fig:bler_medium_L600}
\end{subfigure} 
\begin{subfigure}{.33\linewidth}
  \includegraphics[width=\linewidth]{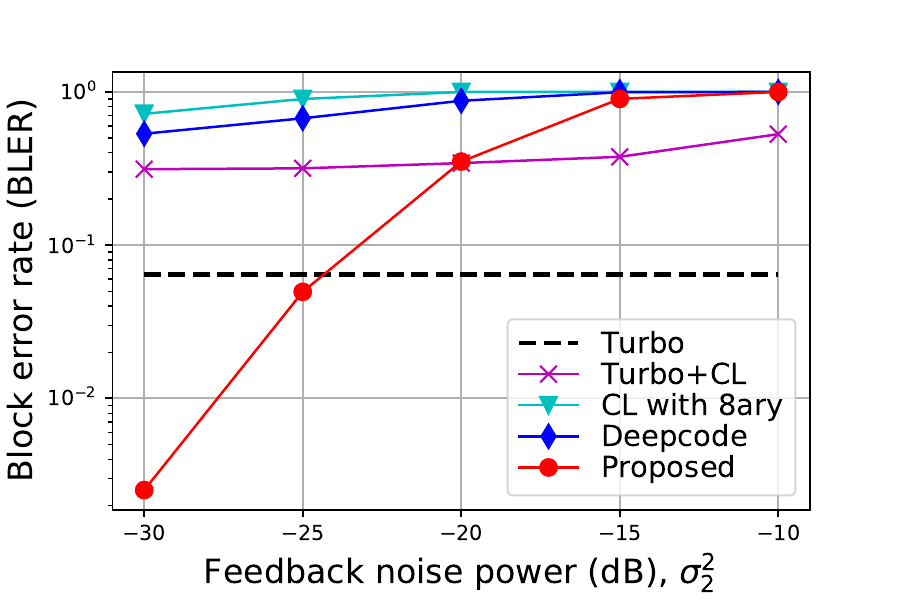}
  \centering
  \caption{$L=1200$.}
  \label{fig:bler_medium_L1200}
\end{subfigure} 
\caption{BLER of various selections of blocklength $L$ with $r=1/3$ when the forward SNR is $-1$dB.}
\label{fig:bler_L_60600}
\end{figure}

\subsection{Other realistic channel scenarios}
\label{ssec:app:channel_effects}

\subsubsection{Delayed feedback channels}

We denote the delay factor as $D$. If $D=0$, the system has no delay on feedback (our default setting), which means the transmitter receives the noisy version of the forward channel output signal in the same time step. For $D>0$, we input the previous transmit signal to the encoder in place of the feedback information until the feedback arrives. Formally, we replace

\begin{equation}
x[t] = \begin{cases}
\text{encoder}({\bf b}, 0) & \text{if} \;\; t=1
\\
\text{encoder}({\bf b}, z[t-1]) &  \text{if} \;\; 1<t \le N 
\end{cases}
\end{equation}
with
\begin{equation}
x[t] = \begin{cases}
\text{encoder}({\bf b}, 0) & \text{if} \;\; t=1
\\
\text{encoder}({\bf b}, x[t-1]) &  \text{if} \;\; 1<t \le D+1 
\\
\text{encoder}({\bf b}, z[t-D-1]) &  \text{if} \;\; D+1 <t \le N 
\end{cases}
\end{equation}
where $\text{encoder}(\cdot)$ denotes the overall process of encoding at the transmitter in Figure~\ref{fig:overall}.
Table~\ref{table:delayed_feedback} shows the BER and BLER performance with different delay factors when $K=6$, $N=18$, $\text{SNR}_1=1$dB, and $\sigma_2^2=0.01$. As expected, the error of our scheme gets larger as $D$ increases. It can tolerate a delay between $D=2$ and $D=3$ before its performance becomes comparable to Deepcode, showing it offers improved robustness to feedback delay.

\begin{table}[H]
\caption{BER and BLER under a delayed  feedback scenario with delay $D$.}
\centering
\scalebox{0.75}{
\begin{tabular}{
|c|c||c|c| }
 \hline
 & Delay & BER & BLER \\
 \hline
 Proposed & $D=0$  & 1.18E-6 & 3.34E-6\\
 \hline
 & $D=1$   & 2.41E-6 & 6.24E-6 \\
 \hline
 & $D=2$   & 2.45E-6 & 7.13E-6 \\
 \hline
 & $D=3$   & 7.67E-5 & 2.18E-4 \\
 \hline
 & $D=4$   & 8.94E-5 & 3.03E-4 \\
 \hline
 & $D=6$   & 2.24E-4 & 8.49E-4 \\
 \hline
 Deepcode & $D=0$  & 6.58E-6  & 3.74E-5\\
 \hline
\end{tabular}}
\label{table:delayed_feedback}
\end{table}

\subsubsection{Quantized feedback channels}

We consider a quantization process on feedback with $Q$ bits, where we obtain $2^Q$ points by uniformly quantizing the interval $[a_{\min}, a_{\max}]$ and map the feedback information $z[t]$ to the closest point in Euclidean distance among the $2^Q$ points. We set $a_{\min}=-5$ and $a_{\max}=5$. We test the effect of quantization on our trained coding architecture (trained with our default scenario, i.e., no quantization). 
Table~\ref{table:quantized_feedback} shows the BER and BLER performance with different numbers of quantization bits, $Q$, when $K=6$, $N=18$, $\text{SNR}_1=1$dB, and $\sigma_2^2=0.01$. We see that for $Q \ge 6$, there is not much loss in BER performance relative to the case of no quantization. Additionally, with $Q=6$ bits, our scheme outperforms Deepcode without quantization.

\begin{table}[H]
\caption{BER and BLER under a quantized  feedback scenario with $Q$-bits quantization.}
\centering
\scalebox{0.75}{
\begin{tabular}{
|c|c||c|c| }
 \hline
 &  & BER & BLER  \\
 \hline
 Proposed & No quantization  & 1.18E-6 & 3.34E-6\\
 \hline
 & $Q=8$   & 1.25E-6 & 3.44E-6 \\
 \hline
 & $Q=6$   & 3.67E-6 & 9.98E-6 \\
 \hline
 & $Q=5$   & 3.32E-5 & 9.44E-5 \\
 \hline
 & $Q=4$   & 1.65E-3 & 4.70E-3 \\
 \hline
 Deepcode & No quantization  & 6.58E-6  & 3.74E-5\\
 \hline
\end{tabular}}
\label{table:quantized_feedback}
\end{table}

\subsubsection{Rician fading channels}

Table~\ref{table:Rician} shows the BER and BLER performance obtained by our proposed scheme and Deepcode when taking into account Rician fading with different $K$-factors, where $K$-factor here refers to the ratio between the power in the direct path and the power in the scattered paths. We consider the same $K$-factors for both the forward/feedback channels. With higher $K$-factors, BER and BLER are improved since the dominant line-of-sight path behaves similar to the AWGN channel. With lower $K$-factors, on the other hand, the effective noise varies largely due to the fluctuation of the fading channel coefficients. Over all regions of $K$-factors, our coding scheme is able to improve performance over Deepcode by a wide margin.

\begin{table}[H]
\caption{BER and BLER under a Rician  fading scenario with different $K$-factors.}
\centering
\scalebox{0.75}{
\begin{tabular}{
|c|c||c|c| }
 \hline
 & $K$-factor  & BER & BLER  \\
 \hline
 Proposed & 3  & 6.85E-3 & 2.04E-2\\
 \hline
 & 10   & 1.62E-4 & 4.33E-4 \\
 \hline
 & 100   & 1.74E-6 & 4.30E-6 \\
 \hline
 & AWGN  & 1.18E-6 & 3.34E-6 \\
 \hline
 Deepcode & 3  & 9.89E-3 & 5.39E-2\\
 \hline
 & 10   & 4.99E-4 & 2.80E-3 \\
 \hline
 & 100   & 1.23E-5 & 6.89E-5 \\
 \hline
 & AWGN  & 6.58E-6 & 3.74E-5 \\
 \hline
\end{tabular}}
\label{table:Rician}
\end{table}





\section{Performance-Complexity Tradeoff}
\label{sec:app:layer}

We provide a detailed evaluation of the tradeoff between  performance and complexity. We consider the BLER performance (for measuring performance) and the FLOPS counts (for measuring complexity).
We first derive the FLOPS counts for encoding/decoding $K$ bits over $N$ channel uses.
The FLOPS counts for encoding and decoding with our coding architecture are
$N  (  6(2N_{e,\text{layer}}-1)N_e^2 + (6K + 5N_{e,\text{layer}} + 8)N_e + 3) $ and
$ 12(2N_{d,\text{layer}}-1)N N_d^2 + 2(5N_{e,\text{layer}}+8)NN_d + 4N_d2^K -2N_d + 2^K -1 $, respectively, while those with Deepcode are $2KN_e^2 + 11KN_e + 4K$ and $12K(3N_d^2 + 5N_d)$, respectively.
Table~\ref{table:ablation:layer} displays the BLER performance in the first three rows and the FLOPS counts in the last row, obtained by our coding architecture with various numbers of layers and neurons of the GRUs at the encoder and decoder, when $K=6$, $N=18$, and $r=1/3$.

\begin{table}[H]
\caption{BLER (for measuring performance) and FLOPS counts (for measuring complexity) under different number of layers and neurons.}
\centering
\scalebox{0.75}{
\begin{tabular}{
|c||c|c|c|c|c|c|c|c|c| }
 \hline
 & \multicolumn{3}{|c|}{Single layer} & \multicolumn{3}{|c|}{Two layers} & \multicolumn{2}{|c|}{Three layers} \\
 \hline
 number of neurons & 10 & 50 & 100 & 10 & 50 & 100 & 10 & 50 \\
 \hline
 $\sigma_2^2=0.01$ & 1.14E-4 & 3.16E-6 & 4.46E-6 & 5.08E-5 & 3.34E-6 & \textbf{1.32E-6} & 3.48E-5 & 4.48E-6 \\
 \hline
 $\sigma_2^2=0.1$ & 5.95E-3  & 1.85E-3 & 1.79E-3 & 5.22E-3 & \textbf{1.63E-3} & 1.94E-3 & 3.78E-3 & 2.19E-3 \\
 \hline
$\sigma_2^2=1$ & 2.78E-2     & 1.45E-2 & 1.35E-2 & 2.38E-2 &  \textbf{1.32E-2} & 1.40E-2 & 2.45E-2 & 1.33E-2\\
 \hline
 FLOPS & 48.6K	& 890.3K &	3.4M &	116.1K &	2.5M &	9.9M &	183.6K &	4.2M \\
 \hline
\end{tabular}}
\label{table:ablation:layer}
\end{table}

Overall, we find that using two layers each with 50 neurons yields strong BLER performances for this setup; increasing to 100 neurons with two layers or 50 neurons with three layers results in substantial increases in FLOPS for the same or worse BLER. While using just 10 neurons in two layers would result in a relatively low level of computational complexity, the BLER becomes substantially worse, especially for lower feedback noises. This analysis provides justification for our choice of two layers each with 50 neurons for our simulations.

On the other hand, suppose we constrain our scheme to have roughly the same complexity as Deepcode. With $K=6$, $N=18$, and $\text{SNR}_1=1$dB, if we use 1 layer with 50 neurons, our FLOPS are 890.3K, compared to 591.3K for Deepcode. For these settings, Table~\ref{table:BLER:deep} reports the BLER comparison between both methods under varying feedback noise levels. For a similar level of complexity, we still see the trend of our scheme obtaining larger improvements compared with Deepcode as the feedback noise increases.

\begin{table}[H]
\caption{BLER of our scheme and Deepcode having similar computational complexity to each other}
\centering
\scalebox{0.75}{
\begin{tabular}{
|c||c|c| }
 \hline
 & Our scheme & Deepcode \\
 \hline
 $\sigma_2^2=0.01$  & 3.16E-6 & 5.46E-5 \\
 \hline
 $\sigma_2^2=0.1$   & 1.85E-3 & 6.25E-3 \\
 \hline
$\sigma_2^2=1$      & 1.45E-2 & 8.07E-2\\
 \hline
\end{tabular}}
\label{table:BLER:deep}
\end{table}

\section{Power Distribution: Detailed Comparison with CL Scheme}
\label{sec:app:power_dist}

Figure~\ref{fig:power_dist_compare} shows power weight distribution along channel uses obtained from using our (non-linear) feedback scheme and the (linear) CL scheme, where the forward SNR is set to 1dB.
As discussed in Section~\ref{ssec:sim:power},
both linear and non-linear schemes allocate more powers at the start and less powers along timesteps under noisy feedback. 
Our scheme and the CL scheme, however, exhibit different power distribution patterns along feedback noise levels.
For the CL scheme, as the feedback link becomes less noisy, the power distribution becomes even along the channel uses. 
However, our scheme produces an even power distribution when the feedback noise becomes higher. 

The exhibition of different power distribution patterns of our scheme and the CL scheme can be understood by different encoding and decoding processes of the two schemes.
First, our scheme has a higher degree of flexibility for feedback coding due to its non-linearity, whereas the feedback codes of the CL scheme are constructed based on the linear assumption.
Second, as mentioned in Appendix~\ref{ssec:app:adaptability}, our scheme may be regarded to have some combined properties of  error correction coding and feedback coding, while the CL scheme is  specifically designed for feedback coding.
Nevertheless, it is crucial to emphasize that in order to maximize the utilization of the feedback information, both linear and non-linear schemes allocate more powers in the beginning and less powers along channel uses under a noisy feedback scenario with reasonable noise powers.



\begin{figure}[h]
  \centering
\begin{subfigure}{.45\linewidth}
    \includegraphics[width=\linewidth]{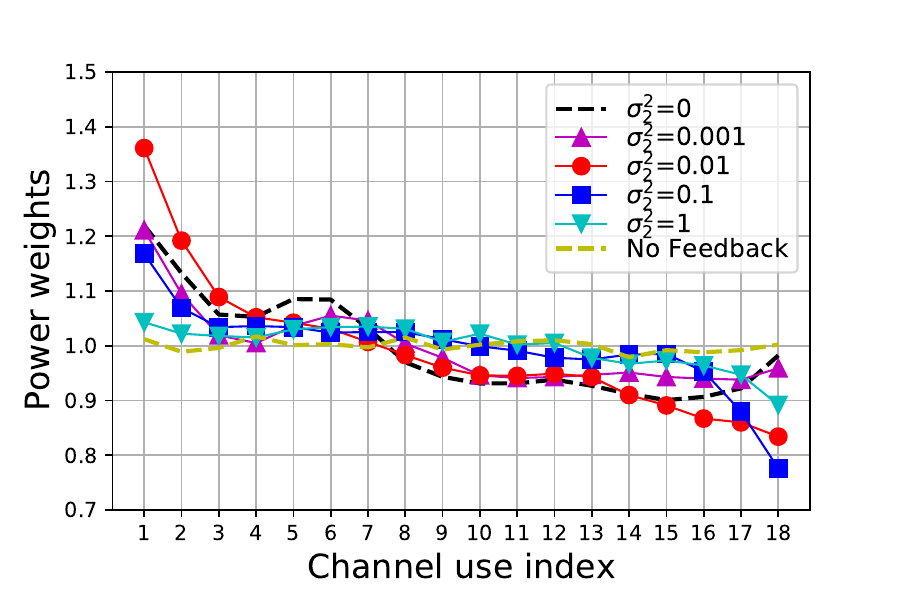}
  \centering
  \caption{Our scheme with $N=18$.}
  \label{fig:power_ours}
\end{subfigure}
\begin{subfigure}{.45\linewidth}
  \includegraphics[width=\linewidth]{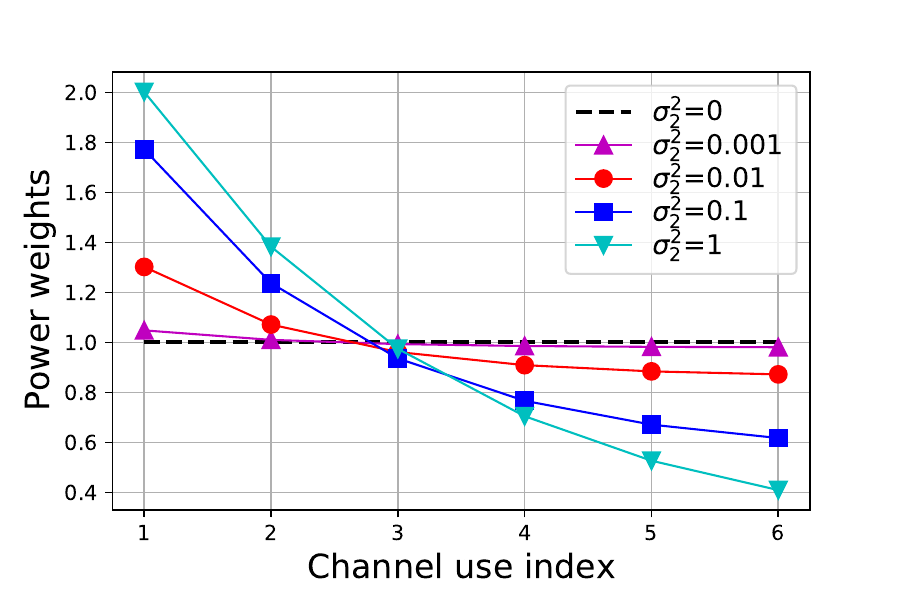}
  \centering
  \caption{CL scheme with $N=6$.}
  \label{fig:power_CL}
\end{subfigure}
\caption{Power distributions over $N$ channel uses obtained by our scheme and the CL scheme.}
\label{fig:power_dist_compare}
\end{figure}

\section{Ablation Studies and Various Configurations for Our Proposed Architecture}
\label{sec:app:ablation}

\subsection{Power control}
\label{ssec:app:power}

Table \ref{table:ablation:power} shows the BLER performances obtained from an ablation investigation of our power control scheme, where we consider $K=6$, $N=18$, and $r=1/3$.
Since the power control layer consists of the two modules of normalization and power-weight multiplication, we examine the four different cases: (i) both normalization and power-weight multiplication (the default), (ii) just normalization, (iii) just power-weight multiplication, and (iv) neither normalization nor power-weight multiplication. 
We find that the best results were obtained when both normalization and power-weight multiplication are employed.

\begin{table}[H]
\caption{Ablation study for power control.}
\centering
\scalebox{0.75}{
\begin{tabular}{
|c||c|c|c|c| }
 \hline
 & Power/norm & Norm & Power & No power/norm \\
 \hline
 $\sigma_2^2=0.01$  & \textbf{3.34E-6} & 2.87E-5 & 4.94E-2 & 6.29E-2\\
 \hline
 $\sigma_2^2=0.1$   & \textbf{1.63E-3} & 3.7E-3  & 5.59E-2 & 5.52E-2\\
 \hline
$\sigma_2^2=1$      & \textbf{1.32E-2} & 1.88E-2 & 1.26E-1 & 1.60E-1\\
 \hline
\end{tabular}}
\label{table:ablation:power}
\end{table}


\subsection{Uni-directinoal GRU versus bi-directional GRU at the decoder}

Table~\ref{table:ablation:uni_bi} shows BLER that was obtained by using our scheme either with uni-directional GRU or bi-directional GRU (the default) at the decoder, where we consider $K=6$, $N=18$, and $r=1/3$. In both cases, we consider two layers of GRUs with 50 neurons each at the encoder and decoder. 
The bi-directional case obtains higher BLERs than those from uni-directional case over the various ranges of feedback powers, despite the uni-directional case yielding comparable performances to the bi-directional case.
This is due to the possibility that a non-causal processing of the receive signals could allow the decoder to  make better use of the receive signal information.

When uni-directional GRU is used in our architecture,
we consider the uni-directional attention mechanism with $N$ forward attention weights. 
Figure~\ref{fig:dist_uni} shows the distribution of the power weights and the uni-directional attention weights.
The power distribution in the uni-directional case in Figure~\ref{fig:power_uni} is similar to the one in the bi-directional case.
In Figure~\ref{fig:att_uni}, 
the attention weights are higher at the final few channel uses, indicating that the decoder attempts to fully exploit the correlation of all the receive information by focusing on the last few outputs of the forward GRU. 
As a result,
both the uni-directional case and the bi-directional case (in Figure~\ref{fig:attention_weights}) seek to maximally utilize the correlation of the receive signals through optimizing its attention weights.


\begin{table}[H]
\caption{Comparison between uni-directional and bi-directional GRUs at the decoder.}
\centering
\scalebox{0.75}{
\begin{tabular}{
|c||c|c| }
 \hline
 & Uni-directional & Bi-directional \\
 \hline
 $\sigma_2^2=0.01$  & 5.24E-6 & \textbf{3.34E-6} \\
 \hline
 $\sigma_2^2=0.1$   & 1.73E-3 & \textbf{1.63E-3}\\
 \hline
$\sigma_2^2=1$      & 1.44E-2 & \textbf{1.31E-2}\\
 \hline
\end{tabular}}
\label{table:ablation:uni_bi}
\end{table}

\begin{figure}[h]
  \centering
\begin{subfigure}{.45\linewidth}
    \includegraphics[width=\linewidth]{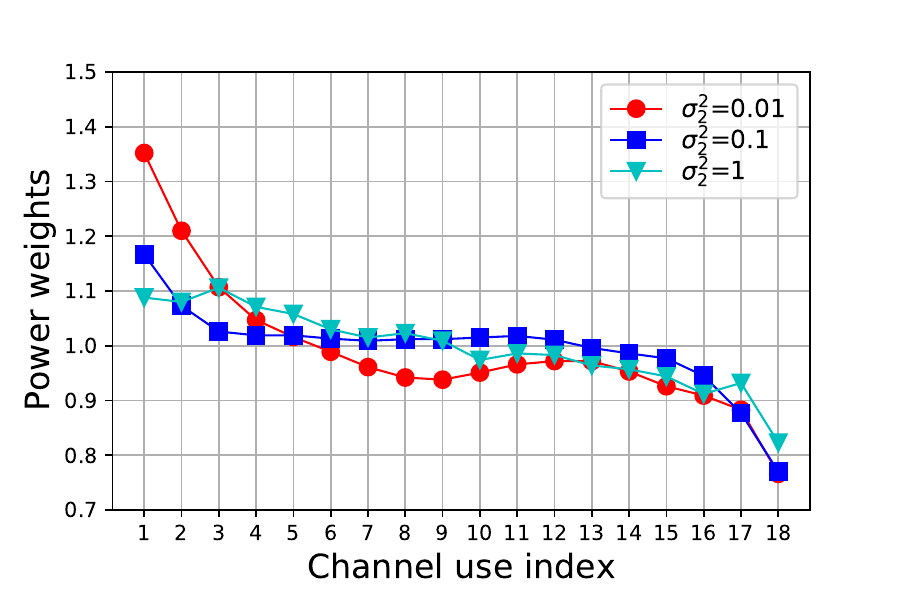}
  \centering
  \caption{Power weight distribution}
  \label{fig:power_uni}
\end{subfigure}
\begin{subfigure}{.45\linewidth}
  \includegraphics[width=\linewidth]{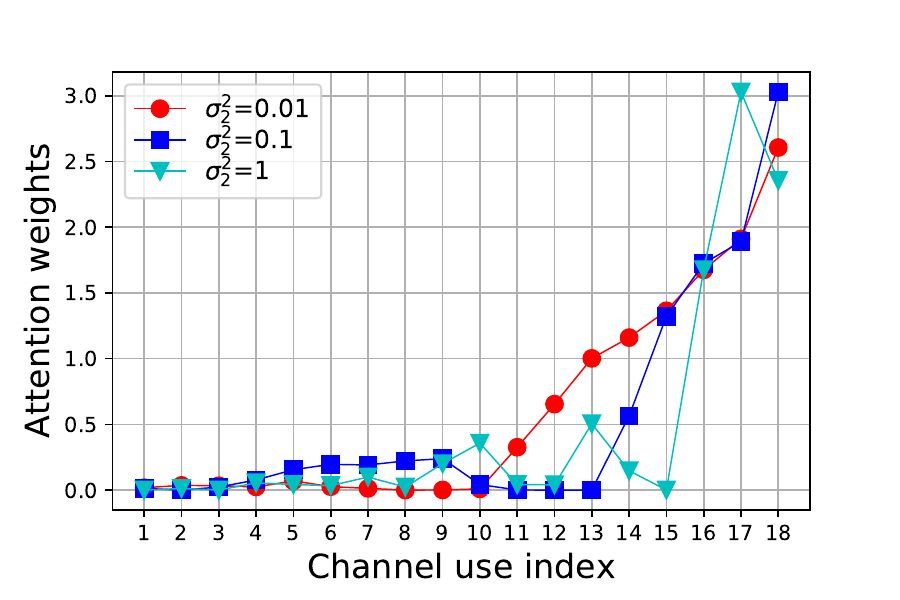}
  \centering
  \caption{Uni-directional attention weight distribution}
  \label{fig:att_uni}
\end{subfigure}
\caption{The distribution of power weights and attention weights in the case of employing uni-directional GRUs in the proposed RNN autoencoder architecture.}
\label{fig:dist_uni}
\end{figure}

\subsection{Different strategies for merging GRU outputs at the decoder}

Table~\ref{table:ablation:attention} shows the BLER performances obtained by using our scheme with various strategies for merging the bi-directional GRU outputs at the decoder, where we consider $K=6$, $N=18$, and $r=1/3$.
We explore the following five scenarios:
Case 1. We use only the hidden states at the $N$-th timestep, i.e., $\alpha_{\text{f},N}=\alpha_{\text{b},N}=1$, while the rest weights are forced to zeros. 
Case 2. We use the hidden states at the $N$-th timestep for the forward GRU and at the 1st timestep for the backward GRU, respectively. That is,
$\alpha_{\text{f},N}=\alpha_{\text{b},1}=1$, while the rest weights are zeros.
Case 3. We sum the hidden states over all the timesteps equally, i.e., $\alpha_{\text{f},k}=\alpha_{\text{b},k}=1$ for all $k=1,...,N$.
Case 4. We use the attention mechanism to put different weights along $N$ timesteps, but the same weight is applied for the forward and backward direction at each timestep. That is, $\alpha_k=\alpha_{\text{f},k}=\alpha_{\text{b},k}$ is trainable for all $k=1,...,N$.
Case 5. As our default configuration, we use a bi-directional attention mechanism where all seperate weights are applied for each timestep as well as for each of the forward and backward directions.
That is, we have $2N$ trainable weights, $\alpha_{\text{f},k}$ and $\alpha_{\text{b},k}$, for $k=1,...,N$.
We discover that in nearly all of the feedback noise power scenarios, the bi-directional attention mechanism at the decoder produced the best BLER performances.

\begin{table}[H]
\caption{Study on different strategies for merging GRU outputs at the decoder.}
\centering
\scalebox{0.75}{
\begin{tabular}{
|c||c|c|c|c|c| }
 \hline
 & Case 1 & Case  2 & Case 3 & Case 4 & Case 5 \\
 \hline
 $\sigma_2^2=0.001$ & 4.37E-7 & 5.25E-8 & 2.55E-8 & 2.31E-8 & \textbf{8.12E-9} \\
 \hline
 $\sigma_2^2=0.01$ & 1.02E-5 & 4.19E-6 & 3.78E-6 & \textbf{3.32E-6} & 3.34E-6 \\
 \hline
 $\sigma_2^2=0.1$   & 2.66E-3 & 1.97E-3 & 1.98E-3 & 1.66E-3 & \textbf{1.63E-3} \\
 \hline
$\sigma_2^2=1$     & 1.47E-2 & 1.46E-2 & 1.38E-2 & 1.36E-2 & \textbf{1.32E-2} \\
 \hline
\end{tabular}}
\label{table:ablation:attention}
\end{table}

\subsection{Sigmoid versus softmax at the last layer at the decoder}

Minimizing BLER as a performance metric for bit stream recovery has been the focus of this paper.
Other metric of interest is bit error rate (BER). Our architecture can be trained  to minimize BER rather than BLER, by simply replacing the softmax activation function with the sigmod activation function at the decoder and considering the binary cross entropy loss rather than the cross entropy loss.
Table~\ref{table:ablation:last_activation:BLER} and \ref{table:ablation:last_activation:BER} show BLER and BER, respectively, obtained by using our scheme when sigmoid or softmax is applied as an activation function at the last layer at the decoder.
We consider $K=6$, $N=18$, and $r=1/3$.
In Table~\ref{table:ablation:last_activation:BLER}, using softmax yields better BLER performances  as expected since 
it allows the neural network to be trained while minimizing the block errors.
In Table~\ref{table:ablation:last_activation:BER},
using the sigmoid is not superior to using the softmax in the noise setting $\sigma_2^2 = 0.1$ and $1$; rather, its performance is comparable to that of the softmax case.
But, at $\sigma_2^2 = 0.01$, utilizing the sigmoid yields a better BER performance. This is because the neural network with the sigmoid function is trained to minimize the bit errors. 
This demonstrates how our learning architecture can be readily modified to accommodate different metrics of interests.





\begin{table}[H]
\parbox{.5\linewidth}{
\centering
\caption{BLER with sigmoid or softmax}
\label{table:ablation:last_activation:BLER}
\scalebox{0.75}{
\begin{tabular}{
|c||c|c| }
 \hline
 & Sigmoid & Softmax  \\
 \hline
 $\sigma_2^2=0.01$  & 3.79E-6 & \textbf{3.34E-6} \\
 \hline
 $\sigma_2^2=0.1$   & 2.61E-3 & \textbf{1.63E-3} \\
 \hline
$\sigma_2^2=1$      & 3.33E-2 & \textbf{1.31E-2} \\
 \hline
\end{tabular}}
}
\hfill
\parbox{.5\linewidth}{
\centering
\caption{BER with sigmoid or softmax}
\label{table:ablation:last_activation:BER}
\scalebox{0.75}{
\begin{tabular}{
|c||c|c| }
 \hline
 & Sigmoid & Softmax  \\
 \hline
 $\sigma_2^2=0.01$  & \textbf{7.95E-7} & 1.18E-6 \\
 \hline
 $\sigma_2^2=0.1$   & 5.17E-4& \textbf{4.60E-4} \\
 \hline
$\sigma_2^2=1$      & 7.44E-3 & \textbf{6.43E-3} \\
 \hline
\end{tabular}}
}
\end{table}

\subsection{Different decoder architectures}

In Table~\ref{table:CNN}, we present the results of an additional experiment which compares the performance of our scheme, based on a bi-directional GRU decoder (our default setup), with a CNN decoder. For the CNN implementation, we employ the following standard architecture: Conv1D-Relu-Conv1D-Relu-Linear-Softmax.
The BLER performance is shown along varying $\sigma_2^2$ when $K=6$, $N=18$, and $\text{SNR}_1=1$dB. When $\sigma_2^2=0.01$, we see that the Bi-directional RNN implementation improves performance in BLER by more than 5 times. This improvement is consistent with our intuition that using GRUs at both the encoder and decoder helps to maximally exploit time correlations in the joint encoder-decoder learning architecture.

\begin{table}[H]
\caption{BLER performance of our scheme based on a bi-directional GRU decoder and a CNN decoder.}
\centering
\scalebox{0.75}{
\begin{tabular}{
|c||c|c| }
 \hline
 & bi-GRU & CNN \\
 \hline
 $\sigma_2^2=0.01$  & \textbf{3.34E-6} & 1.79E-5 \\
 \hline
 $\sigma_2^2=0.1$   & \textbf{1.63E-3} & 1.96E-3  \\
 \hline
$\sigma_2^2=1$      & \textbf{1.31E-2} & 2.34E-2 \\
 \hline
\end{tabular}}
\label{table:CNN}
\end{table}

\section{Study on Block Length Gain}
\label{sec:app:gain}


We note that our architecture promotes the block length gain within the block length $K$ processing. To investigate it, we conducted an additional simulation study by constructing our coding architecture with different values of $K$. Table~\ref{table:block_length_gain} shows the BER and BLER performance obtained by our scheme and Deepcode with different values of $K$ when $\text{SNR}_1=-1$dB and $\sigma_2^2=0.01$. Here, the BLER of $L=300$ is calculated by $\text{BLER} = 1-(1-\text{BLER}_K)^{\lfloor L/K \rfloor}$, where $\text{BLER}_K$ is the BLER of block length $K$ obtained by each coding scheme. The BER performance peaks at $K=6$ for our scheme, i.e., our proposed scheme exploits the block length gain up to $K=6$. Deepcode was shown that it does not give a block length gain for $K>50$ \cite{kim2018deepcode}.

\begin{table}[H]
\caption{BER and BLER with various  numbers of processing bits, $K$, in our coding architecture and Deepcode.}
\centering
\scalebox{0.75}{
\begin{tabular}{
|c|c||c|c| }
 \hline
 & & BER & BLER ($L=300$) \\
 \hline
 Proposed & $K=10$  & 2.28E-3 & 1.95E-1\\
 \hline
 & $K=9$   & 1.38E-3 & N/A \\
 \hline
 & $K=8$   & 8.29E-4 & N/A \\
 \hline
 & $K=7$   & 9.41E-4 & N/A \\
 \hline
 & $K=6$   & \textbf{7.32E-4} & \textbf{1.02E-1} \\
 \hline
 & $K=5$   & 2.03E-3 & 2.48E-1 \\
 \hline
 & $K=4$   & 7.44E-4 & 1.08E-1 \\
 \hline
 & $K=3$   & 6.76E-3 & 1.33E-1 \\
 \hline
 & $K=2$   & 1.64E-3 & 3.01E-1 \\
 \hline
 Deepcode & $K=50$  & 1.31E-3  & 2.98E-1\\
 \hline
 & $K=10$   & 1.34E-3 & 3.17E-1 \\
 \hline
 & $K=6$   & 1.87E-3 & 4.06E-1 \\
 \hline
\end{tabular}}
\label{table:block_length_gain}
\end{table}

For processing long block length $L$, we use the modulo approach by dividing the entire $L$ bits into $\lfloor L/K \rfloor$ chunks each with length $K$. Due to the nature of its time-division processing, the modulo approach does not provide a block length gain for processing the bits with a length of any multiples of $K$. Although our coding scheme does not provide a block length gain for processing large numbers of bits, our scheme (with a choice of $K=6$) outperforms Deepcode in BLER performance approximately by three times due to its greatly improved performance in short block length.
We note that our main focus in this paper is to provide a feedback coding framework for short/finite block lengths rather than providing a block length gain for long block length, and through simulations describe the performance gain that can be expected by taking on such an approach.



\end{document}